\numberwithin{equation}{section}
\pgfplotsset{compat=1.18}
\renewcommand\P{\mathbb{P}}
\newcommand\E{\mathbb{E}}
\newcommand\R{\mathbb{R}}
\newcommand\bg{\bm{g}}
\newcommand\bw{\bm{w}}
\newcommand\hg{\hat{\bm g}}
\newcommand\rem{\tilde{O}\left(n^{-3/2}\right)}
\newcommand\bu{\bm{u}}
\newcommand\hu{\hat{\bm u}}
\newcommand\bv{\bm{v}}
\newcommand\hv{\hat{\bm v}}
\newcommand\eps{\varepsilon}
\newcommand\Tr{{\operatorname{Tr}}}
\newcommand\on[1]{{\mathbf 1}_{\{#1\}}}
\newcommand\proj{{\operatorname{P}}}
\newcommand{\quash}[1]{}
\newcommand{\diag}{\mathop {\rm diag}}
\newcommand{\av}{\mathrm{avg}}
\newcommand{\std}{\mathrm{std}}
\newcommand{\emp}{\mathrm{emp}}
\newcommand{\Nemp}{\mathrm{emp}_{\mathcal{N}}}
\newcommand{\Var}{\mathrm{Var}}
\newcommand{\Cov}{\mathrm{Cov}}
\title[Nodal Count for Orthogonally Invariant Ensembles]{Nodal Count for Orthogonally Invariant Ensembles}
\author{Lior Alon}
\address{Department of Mathematics, Massachusetts Institute of Technology, Cambridge, MA, 02139 USA.}
\email{lioralon@mit.edu}
\author{Dan Mikulincer}
\address{Department of Mathematics, University of Washington, Seattle, WA 98109 USA.}
\email{danmiku@uw.edu}
\author{John Urschel}
\email{urschel@mit.edu}
\subjclass[2020]{Primary 05C50, 15A18, 15B52.}
\keywords{nodal count, orthogonally invariant ensemble, spectral graph theory.}
\newtheorem{theorem}{Theorem}[section]
\newtheorem{definition}[theorem]{Definition}
\newtheorem{notations}[theorem]{Notations}
\newtheorem{lemma}[theorem]{Lemma}
\newtheorem{conjecture}[theorem]{Conjecture}
\newtheorem{proposition}[theorem]{Proposition}
\newtheorem{corollary}[theorem]{Corollary}
\begin{document}

\begin{abstract}
We investigate the nodal count of eigenvectors of random matrices interpreted as operators on signed complete graphs. Our focus is on orthogonally invariant ensembles, with particular attention to the Gaussian Orthogonal Ensemble (GOE). We establish that, as the matrix size tends to infinity, the distribution of nodal counts converges to the same limiting law as the eigenvalue distribution. In the GOE case, this limit is the semicircle law. This result refutes a conjecture, motivated by quantum chaos and quantum graphs, which predicted Gaussian behavior of the nodal count.
\end{abstract}

\maketitle

\section{Introduction}
The purpose of this paper is to investigate the asymptotic nodal count sequence (size of the nodal sets of eigenvectors) of a random real symmetric $n\times n$ matrix as $n\to\infty$. The nodal count sequence of a matrix $A$ is $\phi(A)=\left(\phi(A,k)\right)_{k=1}^n$, with  
\[\phi(A,k):=\left|\{i<j,\ :\ A_{ij}\bm\varphi^{(k)}_{i}\bm\varphi^{(k)}_{j}>0\}\right|,\]
  where $\bm\varphi^{(k)}$ is the eigenvector of the  $k$-th smallest eigenvalue of $A$.
We consider ensembles of matrices that are invariant under orthogonal transformations, most notably, the Gaussian orthogonal ensemble (GOE). Our result for GOE matrices disproves a recent conjecture about the nodal count for randomly signed graphs \cite{alon2023morse}. While our focus is on the discrete setting—namely, nodal counts for matrices—our motivation stems from foundational results on the measure of nodal sets of eigenfunctions on manifolds, as well as the significant progress over the past two decades on the study of nodal sets of random waves.

The study of nodal patterns dates back to the 17th century when it was popularized by Ernst Chladni, who revealed intricate sand patterns on vibrating plates. Over time, mathematicians developed a rigorous theory to understand the interplay between order and chaos in these patterns, paving the way for spectral geometry. The first rigorous study of nodal sets was by Sturm in 1836 \cite{Sturm1836}, who showed that the $k$-th eigenfunction of a Sturm–Liouville operator on an interval has exactly $k-1$ zeros. In higher dimensions, nodal sets—the zero sets of eigenfunctions—exhibit much greater complexity. This naturally leads to questions about their geometric properties and how these depend on the underlying domain or manifold. 

A fundamental question in spectral geometry regarding the size of nodal sets was asked by S.T. Yau \cite{Yau1982}. For a smooth compact manifold $M$, let  $\lambda_{k}$ denote the $k$-th smallest eigenvalue of the Laplacian $\Delta_M$ and $\phi(\Delta_M,k)$ denote the measure of the nodal set of its eigenfunction. Yau's conjecture states that $c_1\sqrt{\lambda_{k}}<\phi(\Delta_M,k)<c_2\sqrt{\lambda_{k}}$ for some positive constants $c_1,c_2$.  The proof for real analytic manifolds was given by Donnelly and Fefferman \cite{donnelly1988nodal}. For smooth manifolds, Logunov and Malinnikova proved the lower bound and gave a polynomial upper bound \cite{Logunov2018Lower,Logunov2018Upper,LogunovMalinnikova2018}. 

To go beyond asymptotic bounds, additional structure is required. For instance, much more can be computed explicitly in the presence of Gaussian randomness. Berry \cite{berry2002statistics} proposed that if a compact planar domain $\Omega \subset \mathbb{R}^2$ is chaotic (in terms of billiard dynamics), then the local behavior of a Dirichlet eigenfunction with eigenvalue $\lambda$ can be modeled by a Gaussian random field $F_{\lambda}$ satisfying $-\Delta F_{\lambda} = \lambda F_{\lambda}$ almost surely. He showed that the nodal set of $F_{\lambda}$ in $\Omega$ has average length $C \cdot \mathrm{area}(\Omega)\sqrt{\lambda}$ and variance of order $\mathrm{area}(\Omega)\log(\lambda)$ \cite{berry2002statistics}. Since then, nodal sets of random wave models and random eigenfunctions have been extensively studied, with notable results for Laplace eigenfunctions on the two-dimensional sphere \cite{nazarov2009number,nazarov2020fluctuations} and torus \cite{rudnick2008volume,Wigman2013nodal}. It was long believed that in all such cases, the properly normalized nodal length has a limiting Gaussian distribution. However, recent remarkable non-universality results identified sequences of eigenvalues for which the limiting nodal length distribution exists but is non-Gaussian \cite{marinucci2016non,Nourdin2019}. Our results reveal a comparable breakdown of universality.


We now turn to our main objects of interest, the nodal count of a matrix, or equivalently, of a finite graph. 
Given a real symmetric $n\times n$ matrix $A$, we associate a weighted signed graph $G$ with $n$ vertices and an edge $(i,j)$ for each pair $i<j$ with non-zero off-diagonal entry $A_{ij}\ne0 $. 
Every edge has an edge weight $|A_{ij}|$, and the edge sign is $-\mathrm{sgn}(A_{ij})$. Thus, for example, if $A$ is a graph Laplacian, then $G$ is unweighted and unsigned. Sort the eigenvalues of a $A$ in increasing order, $\lambda_{1}\le\lambda_{2}\le\ldots\le \lambda_{n}$, and denote the $k$-th eigenvector by $\bm \varphi^{(k)}$. 

\begin{definition} \label{def:counts}
The \emph{nodal count} $\phi(A,k)$ is the number of edges $(i,j)$ of $G$ for which $\varphi^{(k)}$ changes sign with respect to the sign of the edge, namely $A_{ij}\bm \varphi^{(k)}_i \bm \varphi^{(k)}_j>0$ . Under the generic assumption that $\bm \varphi^{(k)}_j\ne 0$ for all $k,j$, we write   
\[\phi(A,k):=\left|\{i<j,\ :\ A_{ij}\bm\varphi^{(k)}_{i}\bm\varphi^{(k)}_{j}>0\}\right|=\frac{1}{2}\sum_{i<j}(1+\mathrm{sgn}(A_{ij}\bm \varphi^{(k)}_i \bm \varphi^{(k)}_j)).\]

\end{definition}
Fiedler was the first to introduce and study this count \cite{Fiedler1975}, who showed that an analogue of Sturm's theorem holds: if $G$ is a tree, then  $\phi(A,k) = k-1$ for all $k$. Conversely, Band  \cite{band2014nodal} showed that if $\phi(A,k)=(k-1)$ for all $k$, then $G$ must be a tree. As in the continuous setting, there are no such uniform estimates for general graphs. Instead, Berkolaiko \cite{berkolaiko2008lower} generalized Fiedler's result and showed that for general graphs the following inequality always holds:
\[0\le\phi(A,k)-(k-1)\le\beta, \]
where $\beta=\beta(G)\ge 0$ is the first Betti number of the graph, and $\beta(G)=0$ if and only if $G$ is a tree. In the presence of randomness, if $G$ is unweighted, and $\bm\varphi$ is replaced by a random Gaussian vector, then the nodal count is binomial and converges to a Gaussian as $n\to\infty$ \cite{bandnodal}. Thes heuristics, together with numerical simulations and analogous results for quantum graphs \cite{gnutzmann2003nodal,alon2024universality}, led to the common belief that the nodal surplus sequence $\sigma(A,k):=\phi(A,k)-(k-1)$ for $k=1,\ldots,n$ should behave like a Gaussian centered at $\beta/2$ as $\beta\to\infty$. We introduce some notation for this matter:
\begin{notations}
    For $\bm{x} \in \mathbb{R}^n$, we define $\av(\bm{x}) = \frac{1}{n}\sum_{i =1}^n \bm{x}_i$, $\std(\bm{x}) = \left( \frac{1}{n}\sum_{i=1}^n (\bm{x}_i - \mathrm{avg}(\bm{x}))^2\right)^{1/2}$, the empirical distribution of its coordinates \[\emp(\bm{x}):=\frac{1}{n}\sum_{k=1}^n\delta_{\bm{x}_{k}},\]
and the normalized empirical measure 
\[\Nemp(\bm{x}):=\mathrm{emp}\left(\frac{\bm{x}-\mathrm{avg}(\bm{x})\bm{1}}{\mathrm{std}(\bm{x})}\right).\]
\end{notations}

Let $\phi(A)=(\phi(A,k))_{k=1}^n$ and $\sigma(A) = (\sigma(A,k))_{k=1}^n$. The common belief was that $\emp_{\mathcal{N}}(\sigma(A))$ is very close to a Gaussian centered at $\beta/2$ with variance of order $\beta$, for large $n$ and $\beta$. This was disproved by two of the authors \cite{alon2024average}, as they proved the following sharp bounds 
\[\tfrac{\beta}{n}\le\av(\sigma(A))\le \beta -\tfrac{\beta}{n}.\]
Examples of matrices saturating the upper and lower bounds were given for any possible choice of $n$ and $\beta$ \cite{alon2024average}. These bounds were based on Berkolaiko's nodal-magnetic theorem \cite{berkolaiko2013nodal,colin2013magnetic}, which states that the nodal surplus is a Morse index of the eigenvalue. More precisely, by assigning to each edge $(i,j)$ an angle $\theta_{ij}=-\theta_{ji}$, the eigenvalues of the Hermitian matrix $(A_{\theta})_{rs}=e^{i\theta_{rs}}A_{rs}$ are functions of $\theta$, that have critical points when $\theta_{ij}\in\{0,\pi\}$ for all edges. At such a critical point, the Morse index of $\lambda_{k}$ turns out to precisely equal the nodal surplus $\sigma(A_{\theta},k)$. Note that such a critical point corresponds to a signing $e^{-i\theta_{ij}} \in \{-1,1\}$ of the edges of the graphs. Moreover, since these signs can be taken to be random and independent from one another, common wisdom about Gaussian universality suggests a possible refinement to the above-mentioned Gaussian conjecture. 
Encouraged by numerical evidence, as well as theoretical results which we detail below, this conjecture was formalized in \cite{alon2023morse}. Specifically, the first author and Goresky conjectured that when a graph is randomly signed, namely $A_{\theta}$ with random $\theta_{ij}\in\{0,\pi\}$, the nodal surplus does converge to a Gaussian.
To be precise, given a fixed real symmetric matrix $A$, a random signing $(A_{\mathrm{sgn}})_{ij}=\pm A_{ij}$ is a symmetric matrix with $|(A_{\mathrm{sgn}})_{ij}| = |(A)_{ij}|$, and with random signs drawn independently and uniformly from $\{-1,1\}$. 
It was conjectured that \cite[p.3]{alon2023morse}:
\begin{center}
    \emph{For any graph $G$ and a generic matrix $A$ supported on $G$,
\begin{equation}\label{conj}
  \Nemp(\sigma(A_{\mathrm{sgn}}))\to N(0,1), \quad\text{as}\quad \beta(G)\to\infty.  
\end{equation}}
\end{center}
As mentioned, this conjecture was confirmed for some specific cases. In \cite{alon2023morse} it is shown that if $G$ is the complete graph and the diagonal entries of $A$ are distinct and sufficiently large, then $\sigma(A_{\mathrm{sgn}},k)\sim \mathrm{Bin}(\beta,\tfrac{1}{2})$ for all $k$, and in particular $\emp(\sigma(A_{\mathrm{sgn}}))\sim \mathrm{Bin}(\beta,\tfrac{1}{2})$. 

For sparser graphs, according to \cite{alon2024nodal}, if the simple cycles of the graph are pairwise disjoint, then similarly, $\sigma(A_{\mathrm{sgn}},k)\sim \mathrm{Bin}(\beta,\tfrac{1}{2})$ for all $k$, and so $\emp(\sigma(A_{\mathrm{sgn}}))\sim \mathrm{Bin}(\beta,\tfrac{1}{2})$, for any generic choice of $A$ supported on $G$. In all mentioned cases the Gaussianity conjecture \ref{conj} holds as $\beta \to \infty$, and follows from standard probabilistic considerations about the binomial distribution.

In this work, we derive a non-universality result, akin to the continuous results of \cite{marinucci2016non,Nourdin2019}, which refutes the Gaussianity conjecture \ref{conj}. Specifically, in Corollary \ref{cor:disprove} we show that if $A$ is a random $n\times n$ GOE matrix,
\[ \lim_{n\to\infty}\Nemp(\sigma(A_{\mathrm{sgn}}))= \rho_{\mathrm{s.c}}\ne N(0,1)\quad\text{almost surely},\]
where $\rho_{\mathrm{s.c}}$ is the normalized semicircle distribution.
To facilitate this result, we will investigate the nodal count of random real symmetric matrices of size $n\times n$ for large $n$. As mentioned, our goal is to study GOE random matrices. However, our results hold more generally and apply to any orthogonally invariant ensemble. An ensemble of random real symmetric matrices of size $n\times n$ is called an \emph{orthogonally invariant ensemble} if it is invariant under conjugation by orthogonal matrices. We defer the formal definitions to Section \ref{sec:ortho} below, but mention for now that a key point is that these ensembles are uniquely determined by the distribution of their eigenvalues. So, if $p(\bm{\Lambda})$ is a distribution over eigenvalues $\bm{\Lambda}=(\lambda_{1},\ldots,\lambda_{n})$, we will denote the corresponding orthogonally invariant ensemble by $\mathrm{OE}_n(p(\bm{\Lambda}))$. 

Our main result states that, for any orthogonally invariant ensemble satisfying some very mild regularity assumptions, the empirical nodal count distribution converges to a properly normalized version of its eigenvalue distribution.
    Below, we use $f(n)=\tilde{O}(n^\alpha)$ to denote $$|f(n)|<C_1 n^{\alpha}\log^{C_{2}}(n)\quad\text{for all $n>N_0$}\quad\text{for some positive constants $N_{0,}C_{1},\ C_{2}$}.$$

\begin{theorem}\label{thm:main}
Let $A \sim \mathrm{OE}_n(p(\bm{\Lambda}))$ satisfy Spectral Growth Bound \ref{spec_growth}. Then
\begin{equation}\label{eqn:expectation}
    \E \left[\phi(A,k) \right] = {n \choose 2} \left( \frac{1}{2} + \frac{\sqrt{2}}{\pi^{3/2}} \E \left[ \frac{\lambda_k - \mathrm{avg}(\bm{\Lambda})}{\mathrm{std}(\bm{\Lambda})} \right]n^{-1/2}  + \tilde O\left(n^{-3/2}\right) \right)
\end{equation}
and 
\begin{equation}\label{eqn:variance}
    \mathrm{Var}\left[\phi(A,k) \right] = \tilde O(n^{5/2})
\end{equation} uniformly over $k \in [n]$.
\end{theorem}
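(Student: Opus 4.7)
The plan is to exploit the decomposition $A = O\bm{\Lambda} O^T$ afforded by orthogonal invariance, where $O$ is Haar-distributed on the orthogonal group and independent of the spectrum $\bm{\Lambda}$; the $k$-th eigenvector is then the $k$-th column of $O$, so $\bm\varphi^{(k)}_i = O_{ik}$. For $i \neq j$, the identity $\sum_m O_{im}O_{jm} = 0$ lets me recenter $A_{ij} = \sum_m \mu_m O_{im}O_{jm}$ with $\mu_m := \lambda_m - \av(\bm{\Lambda})$. Setting $R := O_{ik}O_{jk}$ and $S := \sum_{m \neq k}\mu_m O_{im}O_{jm}$, one has $A_{ij}\bm\varphi^{(k)}_i\bm\varphi^{(k)}_j = R(\mu_k R + S)$, so computing $\E[\phi(A,k)]$ reduces to the conditional probability $\P(R(\mu_k R + S) > 0 \mid \bm{\Lambda})$.

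Conditional on the $k$-th column of $O$, the remaining columns form a Haar-orthogonal frame on the orthogonal complement of $\bm\varphi^{(k)}$, so I expect $S$ to be approximately a centered Gaussian with variance $\tau^2 = \frac{1}{n^2}\sum_{m \neq k}\mu_m^2 = \frac{\std(\bm{\Lambda})^2}{n}(1 + O(1/n))$, asymptotically independent of $R$. Using $\P(\mu_k R + S > 0 \mid R) \approx \Phi(\mu_k R/\tau)$ together with the Taylor expansion $\Phi(x) = \tfrac{1}{2} + x/\sqrt{2\pi} - x^3/(6\sqrt{2\pi}) + \cdots$ and the symmetry $O_{ik} \mapsto -O_{ik}$ yields
\[ \P\bigl(A_{ij}\bm\varphi^{(k)}_i\bm\varphi^{(k)}_j > 0 \,\big|\, \bm{\Lambda}\bigr) = \tfrac{1}{2} + \frac{\mu_k\,\E|R|}{\tau\sqrt{2\pi}} + \tilde{O}\!\left(n^{-3/2}\right). \]
The Haar-entry moment $\E|R| = \E|O_{ik}O_{jk}| = 2/(\pi n) + O(n^{-2})$, coming from the near-Gaussianity of the entries $O_{ik}$ (marginally close to $N(0,1/n)$), then simplifies the linear term to $\tfrac{\sqrt{2}}{\pi^{3/2}} \cdot \tfrac{\mu_k}{\std(\bm{\Lambda})\sqrt{n}}$. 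Summing over the $\binom{n}{2}$ pairs $i < j$ and taking expectation over $\bm{\Lambda}$ then produces (\ref{eqn:expectation}), with Spectral Growth Bound \ref{spec_growth} supplying the polylogarithmic moment control needed to fit every error into $\tilde O(n^{-3/2})$.

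For the variance (\ref{eqn:variance}), I would apply the law of total variance to split $\Var(\phi(A,k))$ into $\Var(\E[\phi(A,k) \mid \bm{\Lambda}]) + \E[\Var(\phi(A,k) \mid \bm{\Lambda})]$. By (\ref{eqn:expectation}) the outer piece is
\[ \binom{n}{2}^{2} \cdot \frac{2}{\pi^3 n}\,\Var\!\left(\frac{\lambda_k - \av(\bm{\Lambda})}{\std(\bm{\Lambda})}\right) + (\text{lower order}), \]
and I expect Spectral Growth Bound \ref{spec_growth} to deliver the concentration $\Var((\lambda_k - \av(\bm{\Lambda}))/\std(\bm{\Lambda})) = \tilde O(n^{-1})$ (for GOE this is essentially Gustavsson's CLT in the bulk together with the Tracy--Widom edge bound), producing $\tilde O(n^2)$. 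The inner piece expands as $\tfrac{1}{4}\sum_{i<j,\,i'<j'} \E[\Cov(\mathrm{sgn}_{ij}, \mathrm{sgn}_{i'j'} \mid \bm{\Lambda})]$ with $\mathrm{sgn}_{ij} := \mathrm{sgn}(A_{ij}\bm\varphi^{(k)}_i\bm\varphi^{(k)}_j)$, and I would stratify the sum by $|\{i,j\} \cap \{i',j'\}|$: the $O(n^4)$ disjoint pairs should each contribute $\tilde O(n^{-3/2})$ by near-independence of the corresponding rows of $O$; the $O(n^3)$ pairs with one shared index $\tilde O(n^{-1/2})$; and the $O(n^2)$ diagonal terms $O(1)$, summing to the claimed $\tilde O(n^{5/2})$.

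The main obstacle will be establishing these covariance bounds quantitatively. While the expectation calculation essentially uses only second-moment information about Haar entries, the disjoint-pair and single-overlap covariances demand fine-grained decorrelation estimates for products of sign variables built from distinct rows of a Haar orthogonal matrix. My plan is to reduce them to mixed Haar moment calculations, e.g.\ via Weingarten calculus, for quantities of the form $\E[O_{im_1}O_{jm_2}O_{i'm_3}O_{j'm_4}]$, combine these with the same Taylor-expansion strategy used for the expectation, and again invoke Spectral Growth Bound \ref{spec_growth} to absorb spectrum-dependent prefactors.
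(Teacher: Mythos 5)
Your overall strategy — write $A=O\operatorname{diag}(\bm\Lambda)O^T$, express $\phi(A,k)$ as a sum of $\binom{n}{2}$ sign indicators, reduce to a per-pair probability, expand around $1/2$, and feed the spectral growth bound into the error — matches the paper's structure, and your algebra for the linear term (recovering the coefficient $\sqrt{2}/\pi^{3/2}$) is right. Where you diverge is the conditioning. You condition on the $k$-th column and approximate $S=\sum_{m\neq k}\mu_m O_{im}O_{jm}$ by a Gaussian of variance $\tau^2$ that is ``asymptotically independent of $R$.'' Given that conditioning, $S$ is a \emph{bilinear} form in an $(n-1)\times(n-1)$ Haar frame, so this is a genuine CLT claim, and to reach the stated precision you would need to certify Gaussian approximation for $\P(\mu_kR+S>0\mid R)$ at the level $\tilde O(n^{-3/2})$. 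You do not sketch how to get such a quantitative rate. The paper instead views $M_{ij}=\langle\Phi_i,P_k\Phi_j\rangle\langle\Phi_i,\operatorname{diag}(\bm\Lambda)\Phi_j\rangle$ as a quartic in two \emph{rows} of $\Phi$ and conditions on the Gaussian preimage $\bg$ of the first row; the residual randomness is then \emph{exactly} two correlated linear forms in a fresh Gaussian, so Sheppard's/Grothendieck's identity $\E[\mathrm{sgn}X\,\mathrm{sgn}Y]=\frac{2}{\pi}\arcsin\rho$ gives the per-pair expectation in closed form, and the only work left is to Taylor-expand the random correlation $\rho$. That choice of conditioning is what makes the estimate tractable without any CLT.

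Two concrete gaps in your variance argument. First, you propose Weingarten calculus for the mixed sign-covariances; the paper explicitly notes that the nonpolynomial $\mathrm{sgn}$ precludes Weingarten, and instead handles $\Cov(\mathrm{sgn}(M_{13}),\mathrm{sgn}(M_{23}))$ and $\Cov(\mathrm{sgn}(M_{12}),\mathrm{sgn}(M_{34}))$ by the same conditioning-plus-arcsine mechanism (condition on the first two rows, reduce the remaining pair to a bivariate Gaussian or to a lower-dimensional instance of the same proposition). Taylor-then-Weingarten could in principle give numbers, but you have not shown the remainder after Taylor truncation is small enough, and the bookkeeping is substantially heavier than the paper's route. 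Second, and more seriously: in your outer piece you write that the spectral growth bound delivers $\Var\bigl((\lambda_k-\operatorname{avg}(\bm\Lambda))/\operatorname{std}(\bm\Lambda)\bigr)=\tilde O(n^{-1})$. It does not. The bound only says $\|(\bm\Lambda-\operatorname{avg})/\operatorname{std}\|_\infty=\tilde O(1)$ with high probability, which gives $\Var(\lambda_k)=\tilde O(1)$, not $\tilde O(n^{-1})$; the rigidity you want is ensemble-specific (Gustavsson/Tracy--Widom for GOE), exactly as you flag parenthetically. Since the theorem is stated for \emph{every} ensemble satisfying the spectral growth bound, using GOE rigidity leaves a hole in the general claim — you would need either an extra hypothesis or a decomposition, like the paper's exchangeability split, that does not isolate $\Var(\lambda_k)$ in this way.
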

At a technical level, the orthogonal invariance allows us to represent $A_{ij}\bm \varphi_i^{(k)}\bm \varphi_j^{(k)}$ as a low-degree polynomial over the orthogonal group, equipped with its Haar measure. However, the existence of the highly non-linear $\mathrm{sgn}$ function in the definition of $\phi(A,k)$ precludes the use of classical tools, such as the Weingarten calculus, for the calculation of $\E\left[\phi(A,k) \right]$ and $\mathrm{Var}\left[\phi(A,k) \right]$. Instead, our approach, which could be of potential independent interest, goes by a reduction to integration over Gaussian variables. While such a reduction would typically require considering a Gaussian space with dimension depending on $n$, we show that through careful conditioning, it is enough to consider a bounded number of Gaussians. In practice, this reduction allows us to estimate correlations between quadratic forms on the orthogonal group with very high precision, see Proposition \ref{prop:main_prop}. As we shall explain, $\phi(A,k)$ can be represented in this form which leads to Theorem \ref{thm:main}.

To better understand Theorem \ref{thm:main} consider the following normalization
\[\phi_{\mathcal{N}}(A,k):=\frac{\pi^{3/2}}{\sqrt{2}}\left(\frac{\phi(A,k)}{{n \choose 2}}-\frac{1}{2}\right) n^{1/2}.\]
Note that $\phi(A,k)$ is invariant to scaling and translating the matrix $A\mapsto c_{1}A-c_{2}\mathrm{I}$ for $c_1>0$ and $c_{2}\in\R$. This invariance is also reflected in the quantitative estimate appearing in Theorem \ref{thm:main}, where we consider the normalized eigenvalue. Therefore, we can and always will assume that $\mathrm{avg}(\bm{\Lambda}) = 0$ and $\mathrm{std}(\bm{\Lambda}) = 1$. Under this normalization, Theorem \ref{thm:main} can be rephrased as
\begin{equation*}
    \E \left[\phi_{\mathcal{N}}(A,k)\right] =  \E[\lambda_k]  + \tilde O\left(n^{-1}\right),\quad\text{and}\quad \mathrm{Var}\left[\phi_{\mathcal{N}}(A,k)\right]=\tilde O\left(n^{-1/2}\right),
\end{equation*}
which immediately affords the following corollary concerning the asymptotics of the limiting empirical distribution $\emp_{\mathcal{N}}\left(\phi(A)\right)$, and as a result, also $\emp_{\mathcal{N}}\left(\sigma(A)\right)$. Recall that $\sigma(A)=\phi(A)+O(n)$ by definition, and that a sequence $\mu_{n}$ of random probability measures is said to \emph{converge almost surely} to a deterministic probability measure $\rho$ if, for any compactly supported continuous function $\varphi$, the random variables $\int\varphi \, d\mu_{n}$ converge to the number $\int\varphi \, d\rho$ almost surely. We denote such convergence by $\mu_{n}\underset{\mathrm{a.s.}}{\longrightarrow}\rho$.

\begin{corollary}[General convergence] \label{cor:conver}
Under the assumptions of Theorem \ref{thm:main}, suppose further that $\mathrm{avg}(\bm{\Lambda}) = 0$ and $\mathrm{std}(\bm{\Lambda}) = 1$, and that there exists some probability measure $\rho$ on $\R$ such that
$$\mathrm{emp}(\bm{\Lambda}) \underset{\mathrm{a.s.}}{\longrightarrow} \rho, \quad \text{as $n\to\infty$}.$$
Then, the normalized nodal count $\phi_{\mathcal{N}}(A)=\left(\frac{\pi^{3/2}}{\sqrt{2}}\left(\frac{\phi(A,k)}{{n \choose 2}}-\frac{1}{2}\right)n^{1/2}\right)_{k\in[n]}$ converges as well:
$$\emp\left(\phi_{\mathcal{N}}(A)\right)\underset{\mathrm{a.s.}}{\longrightarrow} \rho.$$
The normalized nodal surplus $\sigma_{\mathcal{N}}(A)=\left(\frac{\pi^{3/2}}{\sqrt{2}}\left(\frac{\sigma(A,k)}{{n \choose 2}}-\frac{1}{2}\right)n^{1/2}\right)_{k\in[n]}$ satisfies the same convergence
$$\emp\left(\sigma_{\mathcal{N}}(A)\right)\underset{\mathrm{a.s.}}{\longrightarrow} \rho.$$


\end{corollary}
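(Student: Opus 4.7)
The plan is to use the quantitative estimates of Theorem \ref{thm:main} to approximate $\phi_{\mathcal N}(A,k)$ by $\lambda_k$ in $L^2$, uniformly in $k$, and then transfer the assumed convergence $\emp(\bm\Lambda) \to \rho$.

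Under $\av(\bm\Lambda)=0$ and $\std(\bm\Lambda)=1$, Theorem \ref{thm:main} rewrites, after absorbing the normalization factor in $\phi_{\mathcal N}$, as
\[
\E[\phi_{\mathcal N}(A,k)] = \E[\lambda_k] + \tilde O(n^{-1}), \qquad \Var[\phi_{\mathcal N}(A,k)] = \tilde O(n^{-1/2}),
\]
uniformly in $k$. I would first upgrade this to a conditional statement on $\bm\Lambda$: orthogonal invariance lets me write $A = O^T \diag(\bm\Lambda) O$ with $O$ Haar-distributed on $\mathrm{O}(n)$ and independent of $\bm\Lambda$, and since the framework underlying Theorem \ref{thm:main} (see Proposition \ref{prop:main_prop}) amounts to Haar integration with $\bm\Lambda$ treated as a parameter, the same estimates should hold conditional on $\bm\Lambda$:
\[
\E[\phi_{\mathcal N}(A,k)\mid\bm\Lambda] = \lambda_k + \tilde O(n^{-1}), \qquad \Var[\phi_{\mathcal N}(A,k)\mid\bm\Lambda] = \tilde O(n^{-1/2}).
\]
Taking total expectations then yields $\E[(\phi_{\mathcal N}(A,k)-\lambda_k)^2] = \tilde O(n^{-1/2})$, uniformly in $k$.

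For any bounded $L$-Lipschitz test function $\varphi$, the identity coupling between the two empirical measures, followed by Cauchy--Schwarz, gives
\[
\left|\int\varphi\,d\emp(\phi_{\mathcal N}(A)) - \int\varphi\,d\emp(\bm\Lambda)\right| \;\leq\; L\,\sqrt{\frac{1}{n}\sum_k (\phi_{\mathcal N}(A,k)-\lambda_k)^2},
\]
whose expected square is $\tilde O(n^{-1/2})$ by the previous step. A Chebyshev bound combined with Borel--Cantelli along a polynomial subsequence (e.g.\ $n_m = m^3$) yields a.s.\ vanishing of this difference on the subsequence, and boundedness of $\varphi$ together with extra moment control fills in the intermediate $n$; combining with the hypothesis $\int\varphi\,d\emp(\bm\Lambda)\to\int\varphi\,d\rho$ a.s.\ produces $\int\varphi\,d\emp(\phi_{\mathcal N}(A))\to\int\varphi\,d\rho$ a.s.\ for all bounded Lipschitz $\varphi$; a density argument extends to compactly supported continuous $\varphi$. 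For the nodal surplus, the identity $\sigma(A,k)=\phi(A,k)-(k-1)$ yields the uniform estimate $|\sigma_{\mathcal N}(A,k)-\phi_{\mathcal N}(A,k)| = O(n^{-1/2})$, so the same convergence transfers to $\emp(\sigma_{\mathcal N}(A))$ without change.

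The main obstacle I anticipate is rigorously establishing the conditional version of Theorem \ref{thm:main}. While the Haar-integral structure of its proof strongly suggests the conditional estimates hold, propagating the Spectral Growth Bound through the conditioning and ensuring that the error terms remain uniform on an event of high probability for $\bm\Lambda$ requires care. A secondary technical point is the upgrade from polynomial $L^2$ decay to almost-sure convergence along the full sequence, which may need sharper tail bounds such as those afforded by concentration of measure on $\mathrm{O}(n)$.
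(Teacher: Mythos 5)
Your proposal follows the same high-level strategy as the paper's proof: use the uniform mean and variance estimates from Theorem \ref{thm:main} to show that $\phi_{\mathcal{N}}(A,k)$ concentrates near the eigenvalue, then transfer the assumed almost-sure convergence of the empirical spectral measure by testing against bounded Lipschitz functions, and extend to compactly supported continuous functions. The one genuine difference is a simplification on your side: you compare $\phi_{\mathcal{N}}(A,k)$ directly to the random quantity $\lambda_k$, so the hypothesis $\emp(\bm{\Lambda}) \underset{\mathrm{a.s.}}{\longrightarrow}\rho$ applies verbatim. The paper instead sets $x_{n,k} = \E[\lambda_k]$ and compares to this deterministic vector, which forces the additional (and only lightly justified) step that $\tfrac{1}{n}\sum_k\varphi(x_{n,k})\to\int\varphi\,d\rho$; your route avoids that entirely. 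The price is needing Theorem \ref{thm:main} conditional on $\bm{\Lambda}$ — but this is not the obstacle you fear: Lemmas \ref{lm:expectedvalue}, \ref{lm:sharedvertex}, \ref{lm:separate_edges} are each stated and proved for a \emph{fixed} $\bm{\Lambda}$ with $\av(\bm{\Lambda})=0$, $\std(\bm{\Lambda})=1$, $\|\bm{\Lambda}\|_\infty=\tilde O(1)$, and the Spectral Growth Bound handles the complementary event, so the conditional form is already what is proved. Your Cauchy--Schwarz bound on the $\ell^2$ discrepancy serves the same purpose as the paper's truncation-plus-Markov estimate. Finally, you rightly flag that the passage from a variance bound of order $\tilde O(n^{-1/2})$ to convergence almost surely along the full sequence is delicate: Chebyshev gives a non-summable tail, and since the ensembles at different $n$ are not coupled, ``filling in intermediate $n$'' does not follow from boundedness of $\varphi$ alone — this requires either a sharper tail bound or a further argument, a point the paper's own write-up also leaves implicit.
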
 
Corollary \ref{cor:conver} suggests a general machinery of constructing different matrices with different limiting distributions for $\sigma_{\mathcal{N}}(A)$. Specifically, we can choose an appropriate orthogonal ensemble, sample a matrix $A$ from this ensemble, and compute its normalized nodal surplus $\sigma_{\mathcal{N}}(A)$. 
We demonstrate this procedure in Figure \ref{fig:non_unimodal}, where we have chosen an orthogonal ensemble with a tri-modal eigenvalue distribution. The figure contains histograms showing $\emp\left(\phi_{\mathcal{N}}(A)\right)$ and $\emp(\bm{\Lambda})$, in this case. These histograms stand in sharp contrast to the usual unimodal histograms ubiquitous in the nodal count literature. 

\begin{figure}[t]
\centering
  \begin{subfigure}[t]{0.48\textwidth}
    \centering
    \includegraphics[width =\textwidth]{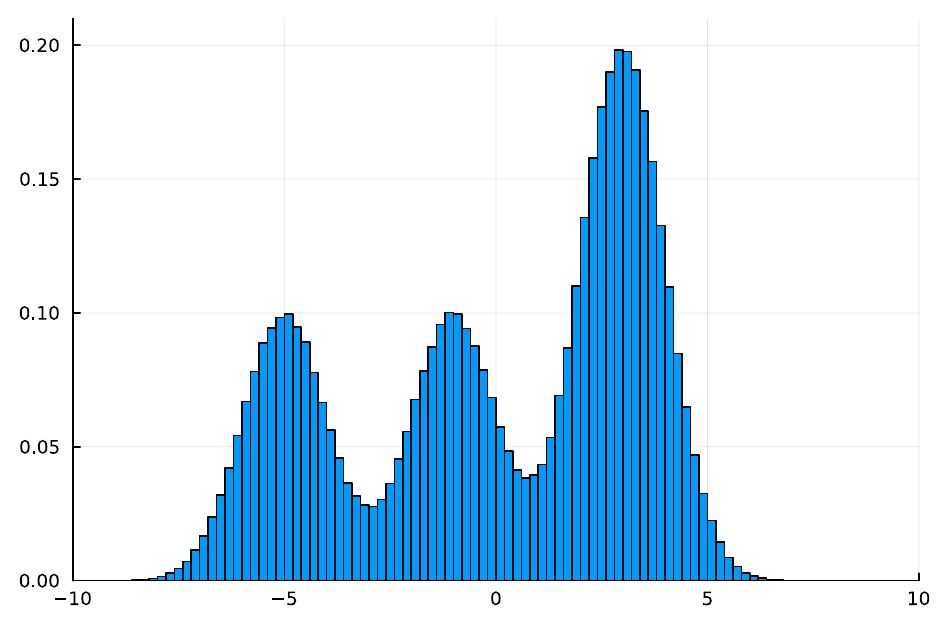}
    \caption{Histogram of the spectrum}
    \label{fig:P_Lambda}
  \end{subfigure}\hfill
  \begin{subfigure}[t]{0.48\textwidth}
    \centering
\includegraphics[width =\textwidth]{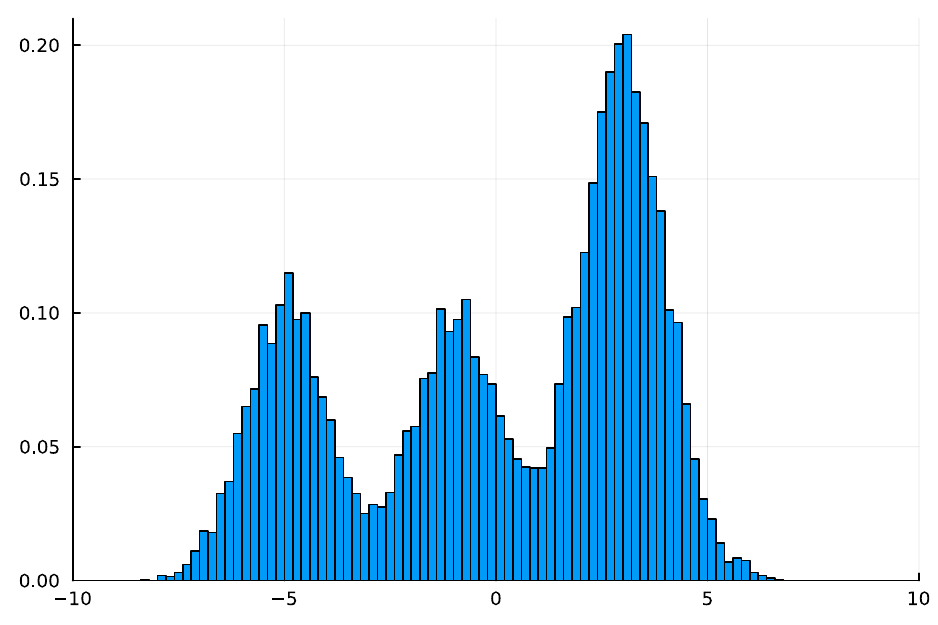}
    \caption{Histogram of the normalized nodal count.}
    \label{fig:sigma_hist}
  \end{subfigure}
 \caption{\textbf{Eigenvalue–nodal count agreement for a Gaussian mixture.} 
Numerical experiment for a \emph{single} $n\times n$ random matrix 
$A = O^{\mathsf T}\!\operatorname{diag}(\Lambda)O$ with $n=10^4$, 
where $O$ is Haar orthogonal and the entries of $\Lambda$ are sampled independently from the Gaussian mixture distribution 
$f(x) = \tfrac{1}{4\sqrt{2\pi}}\!\left(e^{-\tfrac{(x+5)^2}{2}} + e^{-\tfrac{(x+1)^2}{2}} + 2 e^{-\tfrac{(x-3)^2}{2}}\right)$. 
We emphasize that this figure shows results for a single sampled matrix, not an average over many realizations.}
\label{fig:non_unimodal}
\end{figure}

\quash{
\begin{figure}[ht]
  \centering
  \begin{subfigure}[t]{0.48\textwidth}
    \centering
    \includegraphics[width=\textwidth]{fig1_PLambda_hist_n100.pdf}
    \caption{$P(\bm{\Lambda})$ for $n=100$ (histogram with asymmetric three bumps).}
    \label{fig:P_Lambda}
  \end{subfigure}\hfill
  \begin{subfigure}[t]{0.48\textwidth}
    \centering
    \includegraphics[width=\textwidth]{fig2_sigma_hist_n100.pdf}
    \caption{Histogram of $\phi(A,k) - (k-1)$ for $n=100$,
      with a vertical marker at $\beta/2$ on the $x$-axis.}
    \label{fig:sigma_hist}
  \end{subfigure}
  \caption{Comparison of histograms: both panels are qualitatively similar.}
  \label{fig:non_unimodal}
\end{figure}}
\quash{Given a real symmetric matrix $A$, we say that $A_{\mathrm{sgn}}$ is a random signing of $A$ when $(A_{\mathrm{sgn}})_{ij}=(A_{\mathrm{sgn}})_{ji}=\eps_{ij}A_{ij}$ with $\eps_{ij}$ i.i.d Rademacher random variables for all $i< j$.}  

As mentioned, Corollary \ref{cor:conver} is of particular interest when $A$ is a GOE matrix. In that case, the symmetries of the Gaussian distribution imply that $A_{\mathrm{sgm}}$ has the same distribution as $A$, which leads to a refutation of \eqref{conj}.
\begin{corollary}[GOE convergence]\label{cor:disprove}
Let $A$ be a $\mathrm{GOE}_{n}$ matrix and let $A_{\mathrm{sgn}}$ be a random signing of $A$. Then $A_{\mathrm{sgn}}$ is also a $\mathrm{GOE}_{n}$ matrix, 
so 
$$\Nemp\left(\sigma(A_{\mathrm{sgn}})\right)\underset{\mathrm{a.s.}}{\longrightarrow} \rho_{\mathrm{s.c}}\ne N(0,1),$$
where $\rho_{\mathrm{sc}}$ is the normalized semicircle distribution. In particular, Conjecture \cite[pp. 1227]{alon2023morse} is false.    
\end{corollary}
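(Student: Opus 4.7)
The plan is to reduce Corollary \ref{cor:disprove} to Corollary \ref{cor:conver} after establishing two standard facts about the Gaussian orthogonal ensemble: distributional invariance under random signing and Wigner's semicircle law.

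First, I would verify the distributional identity $A_{\mathrm{sgn}} \stackrel{d}{=} A$. Recall that for a $\mathrm{GOE}_n$ matrix, the off-diagonal entries $A_{ij}$ with $i<j$ are i.i.d.\ centered Gaussians (and the diagonal entries are independent centered Gaussians of different variance). Since the Gaussian distribution is symmetric, multiplication by an independent Rademacher $\eps_{ij}\in\{-1,+1\}$ preserves the law of each off-diagonal entry, and independence across entries is preserved as well. Hence $A_{\mathrm{sgn}}$ is still $\mathrm{GOE}_n$. In particular, $\sigma(A_{\mathrm{sgn}})$ has the same distribution as $\sigma(A)$, so it suffices to prove $\Nemp(\sigma(A))\underset{\mathrm{a.s.}}{\longrightarrow}\rho_{\mathrm{s.c}}$.

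Next, I would invoke Corollary \ref{cor:conver} for the (suitably normalized) $\mathrm{GOE}_n$ ensemble. Two ingredients are needed: (a) that $\mathrm{GOE}_n$ is an orthogonally invariant ensemble satisfying Spectral Growth Bound \ref{spec_growth}; and (b) that the empirical eigenvalue distribution $\mathrm{emp}(\bm{\Lambda})$, rescaled so that $\mathrm{avg}(\bm{\Lambda})=0$ and $\mathrm{std}(\bm{\Lambda})=1$, converges almost surely to the normalized semicircle law $\rho_{\mathrm{s.c}}$. Item (a) is the classical definition of $\mathrm{GOE}_n$ as $\frac{1}{2}(G+G^T)$ for $G$ having i.i.d.\ Gaussian entries, and the required tail bounds on the spectrum follow from standard concentration estimates for Gaussian matrices. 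Item (b) is Wigner's semicircle law in its almost-sure formulation. Since the nodal count is invariant under the affine normalization used in Theorem \ref{thm:main}, plugging these into Corollary \ref{cor:conver} yields $\mathrm{emp}(\sigma_{\mathcal{N}}(A))\underset{\mathrm{a.s.}}{\longrightarrow}\rho_{\mathrm{s.c}}$, which is exactly $\Nemp(\sigma(A))\underset{\mathrm{a.s.}}{\longrightarrow}\rho_{\mathrm{s.c}}$ after noting that $\sigma(A)=\phi(A)+O(n)$ and that the additional $O(n)$ shift is negligible after normalization by $\mathrm{std}(\phi(A))$ of order $n^{3/2}$.

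Finally, to conclude the refutation of the conjecture, I would observe that $\rho_{\mathrm{s.c}}$, being the normalized Wigner semicircle, has compact support (in $[-2,2]$) while $N(0,1)$ does not, so $\rho_{\mathrm{s.c}}\neq N(0,1)$. Thus the almost-sure limit of $\Nemp(\sigma(A_{\mathrm{sgn}}))$ differs from the conjectured Gaussian limit in \eqref{conj}. No real obstacle arises beyond checking the regularity hypothesis on $\mathrm{GOE}_n$; the content of the corollary is entirely carried by Theorem \ref{thm:main} and Corollary \ref{cor:conver}, together with the elementary symmetry argument for $A_{\mathrm{sgn}}$.
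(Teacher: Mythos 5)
The overall route you take — random signing preserves the GOE law, invoke Wigner's semicircle law together with Corollary \ref{cor:conver}, and separate $\rho_{\mathrm{s.c}}$ from $N(0,1)$ by support considerations — is the same as the paper's, and the reduction is sound in outline. The supporting observations about orthogonal invariance of $\mathrm{GOE}_n$, the spectral growth bound, and compact support of the semicircle are all correct.

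However, there is one genuine gap, right at the end. Corollary \ref{cor:conver} gives $\mathrm{emp}\big(\sigma_{\mathcal{N}}(A)\big)\underset{\mathrm{a.s.}}{\longrightarrow}\rho_{\mathrm{s.c}}$, where $\sigma_{\mathcal{N}}(A,k)=\tfrac{\pi^{3/2}}{\sqrt{2}}\big(\tfrac{\sigma(A,k)}{\binom n2}-\tfrac12\big)n^{1/2}$ is a \emph{deterministic} affine rescaling of $\sigma(A,k)$. The conclusion of the corollary you are proving, by contrast, is about $\Nemp(\sigma(A))$, which by definition centers and scales by the \emph{empirical} mean $\av(\sigma(A))$ and empirical standard deviation $\std(\sigma(A))$. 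Since $\Nemp$ is invariant under positive affine maps of the data, $\Nemp(\sigma(A))=\Nemp(\sigma_{\mathcal{N}}(A))$, so what is actually needed is that $\av(\sigma_{\mathcal{N}}(A))\to 0$ and $\std(\sigma_{\mathcal{N}}(A))\to 1$ almost surely; weak convergence of the empirical measure does not by itself deliver convergence of these two moments. Your sentence ``which is exactly $\Nemp(\sigma(A))\underset{\mathrm{a.s.}}{\longrightarrow}\rho_{\mathrm{s.c}}$'' asserts this equivalence without argument, and the justification you then supply (that $\sigma(A)=\phi(A)+O(n)$, with the $O(n)$ shift washed out) addresses a different and already-handled point: Corollary \ref{cor:conver} already covers both $\phi_{\mathcal{N}}$ and $\sigma_{\mathcal{N}}$. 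The missing step is precisely the paper's stated fact that if $\mathrm{emp}(\sigma_{\mathcal{N}}(A))\underset{\mathrm{a.s.}}{\longrightarrow}\rho$ and $\rho$ is symmetric about $0$, then $\Nemp(\sigma(A))\underset{\mathrm{a.s.}}{\longrightarrow}\rho$ as well; you should either cite this or give the moment-convergence argument (using that $\rho_{\mathrm{s.c}}$ is compactly supported and that the quantitative variance bound from Theorem \ref{thm:main} controls the second empirical moment of $\sigma_{\mathcal{N}}(A)$).

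Aside from that, the proof is essentially the paper's.
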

See Figure \ref{fig3:a}, where the convergence $\Nemp\left(\sigma(A_{\mathrm{sgn}})\right)\to \rho_{\mathrm{s.c}}$ is shown in a numerical experiment.

\begin{figure}[t]
  \centering
  \begin{subfigure}[t]{0.48\textwidth}
    \centering
    \includegraphics[width=\linewidth]{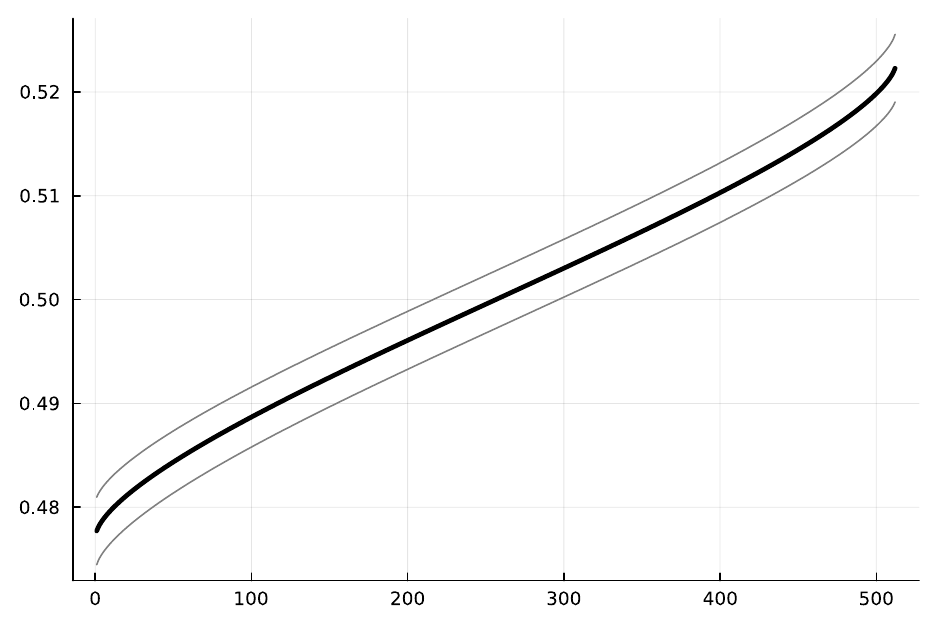} \caption{{}\hspace{-5 mm}\phantom{}} \label{fig2:a}
  \end{subfigure}\hfill
  \begin{subfigure}[t]{0.48\textwidth}
    \centering
    \includegraphics[width=\linewidth]{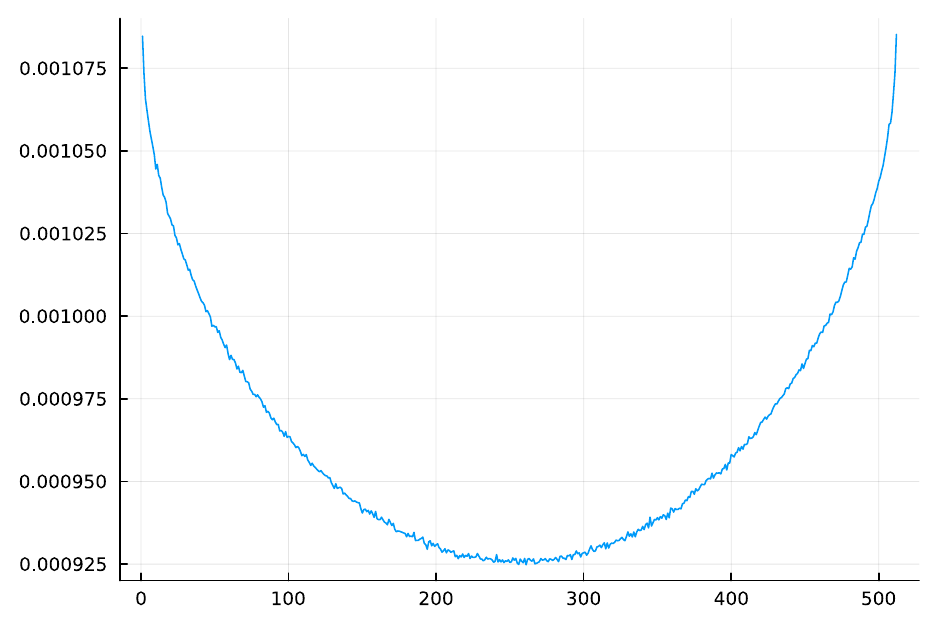} \caption{{}\hspace{-5 mm}\phantom{}} \label{fig2:b}
  \end{subfigure}
  \caption{\textbf{Concentration of the nodal count for GOE matrices.}
 In this experiment we sampled $10^6$ independent random GOE matrices of size $n=2^9$.
On the left, the thick curve shows the empirical mean $\mathbb{E}[\phi(A,k)/\binom{n}{2}]$, while the thin curves indicate $\pm3$ standard deviations.
The mean takes values between $0.48$ and $0.52$, consistent with fluctuations of order $n^{-1/2}$ around $1/2$.
On the right, the standard deviation $\std[\phi(A,k)/\binom{n}{2}]$ is plotted as a function of $k$, taking values between $0.925\times10^{-3}$ and $1.075\times10^{-3}$, of order $n^{-1}$, which means $\Var[\phi(A,k)]$ is of order $n^2$.
}
  \label{fig:combined_var_mean}
\end{figure}

\subsection*{\bf CLT conjecture and further questions:} Corollary \ref{cor:disprove} establishes that randomly signing the edges of the matrix is not enough to ensure universality of the nodal surplus count, and different randomly signed matrices will exhibit different behaviors.

To probe finer universality properties, we focus on the normalized fluctuations of the $k$-th nodal count. That is, let $A$ be random $n\times n$ GOE matrix, and fix $k$. Consider the normalized nodal count
$$\overline{\phi(A,k)}:=\frac{\phi(A,k) - \E\left[\phi(A,k)\right]}{\sqrt{\mathrm{Var}(\phi(A,k))}}.$$ 
It is reasonable to expect that $\overline{\phi(A,k)}$ will be approximately Gaussian. As motivation, Theorem 3.2 (7) in \cite{alon2023morse} provides an explicit sufficient condition on a matrix $A$ that ensures $\overline{\phi(A_{\mathrm{sgn}},k)}$ is approximately Gaussian:  
Given $A$ real symmetric $n\times n$ matrix, for any $\theta$ real antisymmetric matrix define the Hermitian matrix $(A_{\theta})_{rs}=A_{rs}e^{i\theta_{rs}}$. If $A$ is such that for every $\theta$, all eigenvalues of $A_{\theta}$ are simple with nowhere-vanishing eigenvectors, then $A$ has $\sigma(A_{\mathrm{sgn}},k)\sim \mathrm{Bin}(\beta,\frac{1}{2})$ for all $k$. 

In particular, a standard argument on convergence of binomial to Gaussian allows us to conclude that any matrix $A$ satisfying the assumption of Theorem 3.2 (7) in \cite{alon2023morse} has
\[d_\mathrm{Kol}\left(\overline{\phi(A_{\mathrm{sgn}},k)},N(0,1)\right)=o(1),\]
as $n$ grows, independently of $k$ and the choice of $A$, and with $d_\mathrm{Kol}$ standing for Kolmogorov distance.

However, the $k$ independence of $\sigma(A_{\mathrm{sgn}},k)\sim \mathrm{Bin}(\beta,\frac{1}{2})$ implies $\sigma(A_{\mathrm{sgn}})\sim \mathrm{Bin}(\beta,\frac{1}{2})$, which is not the case for a random GOE matrix, by 
Corollary \ref{cor:disprove}. That is, a random GOE matrix $A$ \textbf{does not} satisfy the assumption of Theorem 3.2 (7) in \cite{alon2023morse}.
 Nevertheless, recall that $\phi(A,k) = \frac{1}{2}\sum_{i<j}\big(1+\mathrm{sgn}(A_{ij}\bm \varphi^{(k)}_i \bm \varphi^{(k)}_j)\big)$ is a sum of simple random variables. While these random variables are not independent, their joint law is invariant to certain permutations, and they form an exchangeable double array. That is, the law remains unchanged when either permuting the rows or the columns. We verify, for a random GOE matrix $A$, that $\mathrm{dist}\left(\overline{\phi(A,k)},N(0,1)\right)\to 0$ numerically, as seen in Figure \ref{fig3:b}.

Given the numerical results and the above intuition, we raise the following modest conjecture.
\begin{conjecture}[CLT] \label{conj:CLT}
Let $A$ be a random $n\times n$ matrix drawn from the Gaussian Orthogonal Ensemble. Then, for any $k\in[n]$,  
\[\mathrm{dist}\left(\overline{\phi(A,k)},N(0,1)\right)=o(1),\] 
as $n$ grows, uniformly in $k$, and using any reasonable distance between random variables.
\end{conjecture}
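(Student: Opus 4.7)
The plan is to prove the conjecture by conditioning on $(\lambda_k,\bm\varphi^{(k)})$, reducing $\phi(A,k)$ to a sign-functional of a Gaussian vector whose law is $\mathcal F_k$-measurable, and then applying a quantitative CLT in Gaussian space. Writing $A=\lambda_k\bm\varphi^{(k)}(\bm\varphi^{(k)})^{\mathsf T}+A^{\perp}$ and using the Gaussianity of GOE, the matrix $A^{\perp}$ is, conditional on $\mathcal F_k:=\sigma(\lambda_k,\bm\varphi^{(k)})$, a centered Gaussian symmetric matrix satisfying the linear constraint $A^{\perp}\bm\varphi^{(k)}=0$, with an explicit covariance determined by $\bm\varphi^{(k)}$. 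In particular the family $Y_{ij}:=A^{\perp}_{ij}\bm\varphi^{(k)}_i\bm\varphi^{(k)}_j$ is a conditionally Gaussian vector, and
\begin{equation*}
\phi(A,k)=\tfrac12\sum_{i<j}\bigl(1+\mathrm{sgn}(Y_{ij}+c_{ij})\bigr),\qquad c_{ij}:=\lambda_k\bigl(\bm\varphi^{(k)}_i\bm\varphi^{(k)}_j\bigr)^{2}.
\end{equation*}

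For the conditional CLT I would regularize $\mathrm{sgn}$ by a smooth odd approximation $\psi_\epsilon$ such as $\mathrm{erf}(\cdot/\epsilon)$, so that $W_n:=\sum_{i<j}\psi_\epsilon(Y_{ij}+c_{ij})$ becomes a smooth Gaussian functional to which the Nourdin--Peccati fourth-moment theorem and contraction bounds apply. The contraction norms of $W_n$ should reduce, via a higher-order variant of Proposition \ref{prop:main_prop}, to orthant-type Gaussian integrals in a bounded number of variables. An equivalent route is the method of moments: it suffices to show that the joint conditional moments $\E\bigl[\prod_\ell\mathrm{sgn}(Y_{i_\ell j_\ell}+c_{i_\ell j_\ell})\bigm|\mathcal F_k\bigr]$ obey a Wick-type factorization over pair partitions of the index multiset, up to lower-order corrections. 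Taking $\epsilon\to 0$ at an appropriate polynomial-in-$n$ rate, and combining with the GOE eigenvector delocalization bound $\|\bm\varphi^{(k)}\|_\infty=O(n^{-1/2}\log^{C}n)$ and eigenvalue rigidity, would freeze the conditional covariance to a deterministic profile and make the conditional CLT uniform in $\mathcal F_k$, yielding the unconditional statement. Uniformity in $k$ follows because all bounds depend on $\lambda_k$ only through the rescaled eigenvalue $\lambda_k/\sqrt n$, which is bounded by $2+o(1)$ with overwhelming probability uniformly in $k\in[n]$.

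The hardest step will be pinning down $\mathrm{Var}[\phi(A,k)]$ at its conjectured true order $\Theta(n^{2})$. Theorem \ref{thm:main} gives only the weaker $\tilde O(n^{5/2})$ upper bound, although Figure \ref{fig:combined_var_mean} strongly suggests the $n^{2}$ scale is correct. Reaching this sharper bound requires tracking a cancellation in the two-point correlator $\E[X_{ij}X_{i'j'}]-\E[X_{ij}]\E[X_{i'j'}]$, whose dominant contribution comes from pairs $\{i,j\},\{i',j'\}$ sharing at most one vertex, to an order beyond what the bounded-dimensional Gaussian reduction currently yields. Equally delicate is the higher-order Wick factorization needed for the moment or Malliavin analysis: the non-smoothness of $\mathrm{sgn}$ at the origin, combined with the fact that the deterministic shift $c_{ij}$ is typically of the same order as the standard deviation of $Y_{ij}$, means that naive Gaussian calculus is not enough, and a refinement of the conditioning-plus-Gaussian-reduction framework of Section \ref{sec:ortho}, applied to joint densities of many $Y_{ij}$ simultaneously, seems to be the essential new input.
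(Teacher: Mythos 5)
This statement is a \emph{conjecture}, not a theorem: the paper explicitly says ``Our proof falls short of addressing Conjecture \ref{conj:CLT}'' and offers only numerical evidence (Figure \ref{fig3:b}) together with a heuristic about exchangeable arrays. There is therefore no proof in the paper to compare against, and your proposal would need to stand entirely on its own. It does not.

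The central gap is in the conditioning step. You write $A=\lambda_k\bm\varphi^{(k)}(\bm\varphi^{(k)})^{\mathsf T}+A^{\perp}$ and assert that, conditional on $\mathcal F_k=\sigma(\lambda_k,\bm\varphi^{(k)})$, the matrix $A^{\perp}$ is a centered Gaussian symmetric matrix with $A^{\perp}\bm\varphi^{(k)}=0$ and a covariance depending only on $\bm\varphi^{(k)}$. That would be true if one conditioned merely on ``$\bm\varphi^{(k)}$ is \emph{an} eigenvector of $A$ with eigenvalue $\lambda_k$'': in that case $A^{\perp}$ restricted to $(\bm\varphi^{(k)})^{\perp}$ is a GOE of size $n-1$. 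But $\mathcal F_k$ records that $\lambda_k$ is the $k$-th \emph{smallest} eigenvalue, which is an additional event of the form ``$A^{\perp}|_{(\bm\varphi^{(k)})^{\perp}}$ has exactly $k-1$ eigenvalues below $\lambda_k$.'' Conditioning on that spectral-count event destroys Gaussianity of $A^{\perp}$: the resulting law is GOE tilted by a hard spectral constraint, whose Radon--Nikodym derivative is not log-quadratic. Consequently the $Y_{ij}=A^{\perp}_{ij}\bm\varphi^{(k)}_i\bm\varphi^{(k)}_j$ are not a Gaussian family given $\mathcal F_k$, the Wick/pair-partition factorization you invoke fails, and the Nourdin--Peccati fourth-moment machinery does not apply as stated. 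Making this rigorous would require showing that, in the bulk and for the relevant low-dimensional marginals, the conditioned law is close to the unconditioned GOE --- a genuine open question about spectral rigidity under conditioning, not a routine step.

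Beyond that, you candidly note that you have not established $\mathrm{Var}[\phi(A,k)]=\Theta(n^2)$, only the paper's $\tilde O(n^{5/2})$ upper bound. Without the correct normalization of $\overline{\phi(A,k)}$, a CLT cannot be asserted; a too-large normalizer would send the limit to $\delta_0$ rather than $N(0,1)$. The regularization limit $\epsilon\to 0$ also needs an anti-concentration input for $Y_{ij}+c_{ij}$ near zero at a rate compatible with the chosen $\epsilon=\epsilon(n)$, which is not provided. In short, the proposal is a reasonable program outline, consistent with the paper's own speculative remarks, but it does not constitute a proof, and its first step already contains a substantive error.
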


A bolder conjecture would involve a much larger class of signed matrices. In line with the original conjecture in \eqref{conj}, one could consider any sequences of matrices with diverging Betti numbers. The paper \cite{alon2025smooth} raises such a question and contains further discussion.

Our proof falls short of addressing Conjecture \ref{conj:CLT}. One reason is that our estimate $\mathrm{Var}\left[\phi(A,k) \right] = \tilde O(n^{5/2})$ from Theorem \ref{thm:main} is not sharp. Numerical simulations suggest that the true growth rate is $\mathrm{Var}\left[\phi(A,k) \right] = \tilde O(n^{2})$  as seen in Figure \ref{fig3:c}. 

A natural approach towards the conjecture would be to understand the moments, at least the lower order ones, of $\overline{\phi(A,k)}$, which would require tightening our analysis. This is further emphasized in Figure \ref{fig2:b}, which shows that there are lower-order dependencies of $\mathrm{Var}[\phi(A,k)]$ with respect to $k$. Our coarse bounds are unable to capture this quadratic behavior. In light of this, we view a better understanding of this plot as a first step towards answering Conjecture \ref{conj:CLT}. While, in principle, our approach is well-suited for making these finer estimates, it would make the proof much more technical. Thus, for the sake of readability and to reduce the technicality of this paper, we chose not to pursue this direction in the present work, and leave it as an interesting question for the future.

\begin{figure}[t]
  \centering
  \begin{subfigure}[t]{0.32\textwidth}
    \centering
    \includegraphics[width=\linewidth]{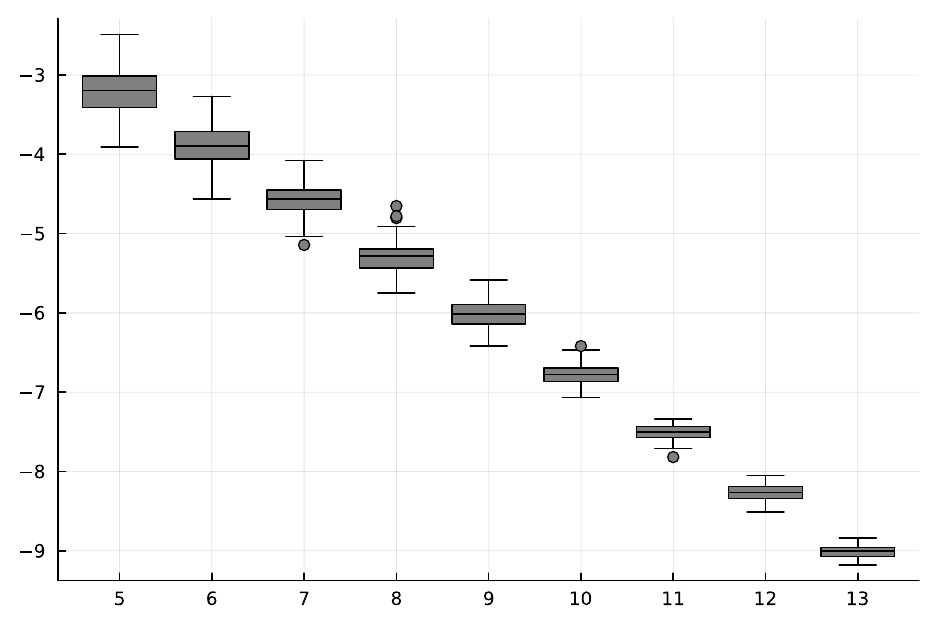}\caption{} \label{fig3:a}
  \end{subfigure}\hfill
  \begin{subfigure}[t]{0.32\textwidth}
    \centering
    \includegraphics[width=\linewidth]{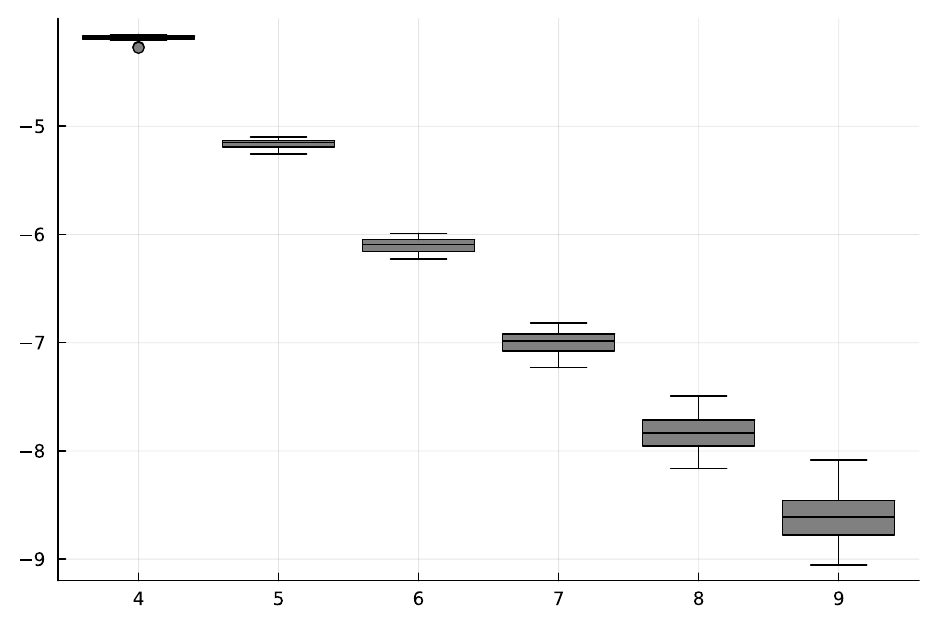} \caption{} \label{fig3:b}
  \end{subfigure}\hfill
  \begin{subfigure}[t]{0.32\textwidth}
    \centering
    \includegraphics[width=\linewidth]{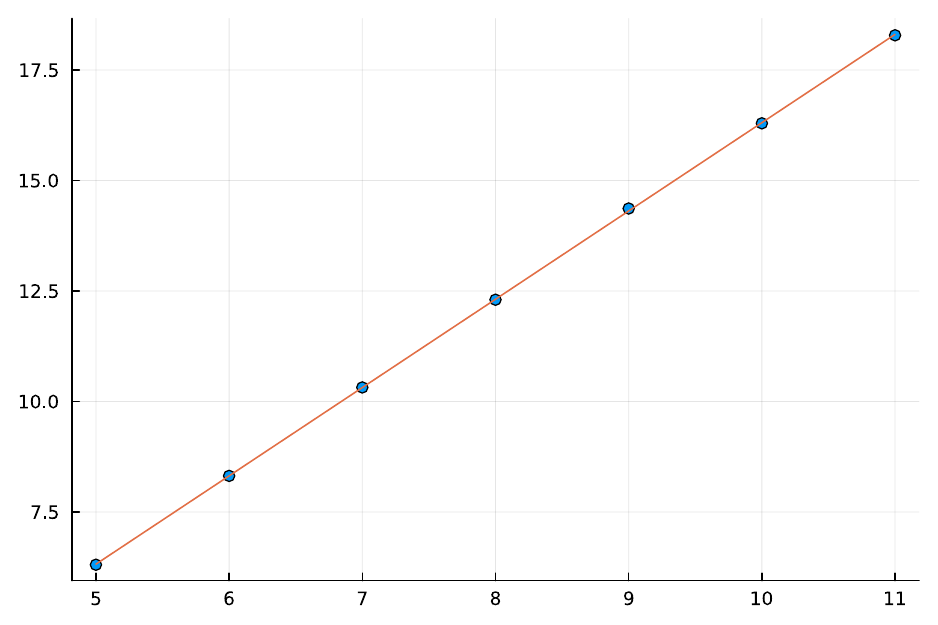}\caption{} \label{fig3:c}
  \end{subfigure}

  \caption{\textbf{Asymptotic statistics of the nodal count for GOE matrices.}
  \small
  (\textbf{A})~Log--log boxplot of the 1-Wasserstein distance between 
  $\emp(\phi_{\mathcal{N}}(A))$ and the semicircle distribution, 
  as a function of $\log n$ for $n=2^5,2^6,\dots,2^{13}$.
  Each box summarizes $100$ independent samples of $\mathrm{GOE}_n$ matrices: 
  the central line indicates the median, the box the interquartile range, 
  whiskers extend to $1.5$ times the interquartile range, and circles mark outliers. 
  (\textbf{B})~Similar log--log boxplot of the Kolmogorov–Smirnov (KS) distance between $\phi(A,k)$ and a Gaussian distribution of the same mean and variance, for $n=2^4,\dots,2^9$. Each box summarizes the distribution of KS distances over all $k\in\{1,\ldots,n\}$, computed from $10^6$ independent GOE matrices.
  (\textbf{C})~Log--log plot of $\max_k\mathrm{Var}[\phi(A,k)]$ versus $\log n$, 
  estimated from $10^4$ independent samples; the fitted slope $1.998$ and intercept $-3.670$ 
  suggest $\mathrm{Var}[\phi(A,k)]=\tilde{O}(n^2)$.}
  \label{fig:asymptotic_stats}
\end{figure}


\section{Preliminaries and Proofs of Corollaries}
\subsection{Some asymptotic notation}
Throughout, we shall use the standard big $O$ and $\Omega$ for asymptotic notation. Since we care about polynomial bounds, we shall use $\tilde{O}$ to hide poly-logarithmic factors. Formally, $\tilde{O}(f(n))$ means $O(\log(n)^cf(n))$ for some $c>0$.
For vectors, we shall deviate slightly from the standard usage of the notation. Thus if $\nu$ is some vector, we shall write $\nu = \tilde{O}(f(n))$, when $\frac{\|\nu\|}{\sqrt{n}} =  \tilde{O}(f(n))$, The main point is that with this definition, for \emph{most} unit vectors $\theta$,
\begin{equation} \label{eq:asynot}
    \langle \nu,\theta\rangle = \tilde{O}(f(n)).
\end{equation}
To avoid possible confusion, we will recall this notation when using it.
Similarly, if $N$ is any random variable, we write $N = \tilde{O}(f(n))$ when $\sqrt{\E[N^2]} = \tilde{O}(f(n))$. In particular by the Cauchy-Schwartz inequality, if $N = \tilde{O}(f(n))$ and $N' = \tilde{O}(g(n))$, then $\E[N\cdot N']  = \tilde{O}(f(n)g(n))$.

\subsection{Preliminaries on real orthogonal ensembles} \label{sec:ortho}
Here, we shall give a brief overview of orthogonally invariant ensembles. In this work, our focus is on extracting meaningful statistics from an orthogonal ensemble when the eigenvalues of the ensemble are known. This is different from the traditional setting, where one wishes to understand the eigenvalues from the law of the random matrices. Therefore, we shall introduce the orthogonally invariant ensembles in an ad hoc and somewhat nontraditional way. The reader is referred to \cite{deift2009orthogonal} for a more comprehensive treatment, as well as references to all claims made in the following.

Let $p$ be a probability distribution on $\R^n$. For $\bm{\Lambda}\sim p$ we order its entries by $\lambda_1\leq\lambda_2,\dots\leq \lambda_n$. For $\Phi \sim \mathrm{Haar}(\mathcal{O}(n))$, a random orthogonal matrix, independent from $\bm{\Lambda}$, we say that the random matrix $M = \Phi\mathrm{diag}(\bm{\Lambda})\Phi^T$ is drawn from the (real) orthogonal ensemble with eigenvalue distribution $p$, and denote it by $\mathrm{OE}_n(p(\bm{\Lambda}))$. Equivalently, a symmetric $M$ is drawn from some orthogonal ensemble if $\Psi M \Psi^*$ has the same distribution as $M$ for any orthogonal matrix $\Psi$. 

Perhaps the best-known orthogonal ensemble is the Gaussian Orthogonal Ensemble (GOE), in which up to symmetries $M$ has independent Gaussian entries. Specifically, $M_{i,j} \sim N(0,\frac{1}{n})$ when $i\neq j \in [n]$ and $M_{i,i} \sim N(0, \frac{2}{n})$ for $i \in [n]$. In this case, the rotational invariance of the Gaussian immediately implies that $M$ is drawn from some orthogonal ensemble. In fact, we can show that the eigenvalue distribution is explicitly given by
\begin{equation} \label{eq:GOEeigs}
    p_{\mathrm{GOE}(n)}(\bm{\Lambda}) \propto \exp\left(-\frac{1}{4}\sum\limits_{i=1}^n\lambda_i^2\right) \prod_{i<j} |\lambda_i-\lambda_j|.
\end{equation}
A cornerstone result in random matrix theory states that the empirical eigenvalue distribution of the GOE stabilizes as $n\to \infty$. That is, if $\bm{\Lambda}_n\sim  p_{\mathrm{GOE}(n)}$,
then 
\begin{equation} \label{eq:wigner}
    \emp(\bm{\Lambda}_n) \underset{\mathrm{a.s.}}{\longrightarrow} \rho_{\mathrm{s.c}},
\end{equation}
where $\rho_{\mathrm{s.c}}$ is the semi-circle law, with explicit density
$\frac{d\rho_{\mathrm{s.c}}}{dx}(x) = \frac{1}{2\pi}\sqrt{4-x^2}\on{|x|\leq 2}.$ 

Of course, there are many other examples of orthogonal ensembles. For example, if $M$ has independent Gaussian entries, then $MM^*$ is orthogonally invariant, and its empirical spectral distribution converges to the Marchenko-Pastur law. In general, the definition above allows the construction of many different examples, as we've demonstrated in Figure \ref{fig:non_unimodal}.

For our result, we will need to impose a mild regularity condition on the allowed eigenvalue distributions, which we now define.
\begin{definition}[Spectral growth bound]\label{spec_growth}
A random orthogonally invariant ensemble $\mathrm{OE}_n(p(\bm{\Lambda}))$ is said to satisfy the spectral growth bound if 
\[\left\| \frac{\bm{\Lambda} - \mathrm{avg}(\bm{\Lambda})}{\mathrm{std}(\bm{\Lambda})} \right\|_{\infty} = \tilde O(1)
\]
with probability at least $1 - \tilde O\left({n^{-3/2}}\right)$.
\end{definition}
To expand a bit on Definition \ref{spec_growth} and its uses, recall that if $M \sim \mathrm{OE}_n(p(\bm{\Lambda}))$, then for any $k \geq 0$, $\mathrm{Tr}(M^k) = \sum\limits_{i=1}^n\lambda_i^k$. In that case, the condition dictates that once we normalize $\mathrm{Tr}(M) = 0$ and $\mathrm{Tr}(M^2) = n$, $M$ must satisfy 
$\|M\|_{\mathrm{op}} \leq C\log(n)^c$ and $\mathrm{Tr}(M^k) \leq C\log(n)^{(k-2)c}n$, for some constants $C,c>0$, with non-negligible probability. Below, we will use this consequence with $k=4$.

For the GOE, it is straightforward to verify, using \eqref{eq:GOEeigs}, that the condition is satisfied. A tighter analysis actually shows that we can choose $c = 1$, see \cite[Chapter 3.1]{anderson2010introduction} for example. Similar estimates hold whenever the tails of the eigenvalue distribution are not too heavy.
\subsection{Preliminaries on Gaussian concentration}
Our proof of Theorem \ref{thm:main} will require us to estimate some integrals on the orthogonal group. As we shall demonstrate in the proof, it will be beneficial to reduce this calculation to Gaussian space. Towards that, we collect some useful results in this section.

The first one is a basic concentration inequality for low-degree polynomials in Gaussian variables.
\begin{proposition}[{\cite[Theorem 6.7]{janson1997gaussian}}]\label{prop:gaussian_concentration}
Let $p$ be a degree $k$ polynomial and let $G$ be a standard Gaussian vector. There exists a universal constant $c_k$ such that
\[ \mathbb{P} \left( |p(X)| > t \; \E \big[ |p(X)|^2\big]^{1/2} \right) \le \exp\{-c_k t^{2/k} \} \qquad \text{for all } t \ge 2.\] 
\end{proposition}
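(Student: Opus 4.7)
The plan is to prove this via the standard hypercontractivity route for Gaussian polynomial chaos, which reduces the tail bound to a moment bound and then to Markov's inequality with an optimized exponent.

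First, I would establish the hypercontractive moment estimate
\[
\|p(G)\|_q \;\le\; (q-1)^{k/2}\,\|p(G)\|_2 \qquad \text{for all } q \ge 2.
\]
The argument decomposes $p$ along the Hermite (Wiener) chaoses: write $p = \sum_{j=0}^{k} p_j$ where $p_j$ is the $L^2$-projection onto the $j$-th chaos. Nelson's hypercontractive inequality for the Ornstein--Uhlenbeck semigroup $P_t$ states $\|P_t f\|_q \le \|f\|_2$ whenever $q-1 \le e^{2t}$. Since $P_t$ acts as multiplication by $e^{-jt}$ on the $j$-th chaos, choosing $t = \tfrac{1}{2}\log(q-1)$ and applying this to $e^{jt} p_j$ yields $\|p_j\|_q \le (q-1)^{j/2}\|p_j\|_2$. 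Either a triangle-inequality sum over $j \le k$, or a slightly sharper quadratic-form argument, gives the displayed bound with $j$ replaced by $k$ (since each chaos component contributes at most $(q-1)^{k/2}\|p_j\|_2$ and $\sum_j \|p_j\|_2 = \|p\|_2$ up to a dimension-free constant depending only on $k$, which can be absorbed into $c_k$).

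Second, I would convert this moment bound to a tail bound. Let $M := \E[p(G)^2]^{1/2}$. For any $q \ge 2$, Markov's inequality gives
\[
\P\!\left(|p(G)| > t\,M\right) \;\le\; \frac{\E[|p(G)|^q]}{(tM)^q} \;\le\; \frac{(q-1)^{qk/2}}{t^{q}}.
\]
Taking logarithms, the exponent is $-q\log t + \tfrac{qk}{2}\log(q-1)$. Optimizing in $q$, the choice $q - 1 = e^{-1}\, t^{2/k}$ makes $(q-1)^{k/2} = t/e$, so the right-hand side becomes $e^{-q} = \exp\!\bigl(-1 - e^{-1} t^{2/k}\bigr)$, which is bounded by $\exp(-c_k t^{2/k})$ for a suitable universal $c_k$. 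The threshold $t \ge 2$ is exactly what ensures that the optimal $q$ satisfies $q \ge 2$, so the moment inequality is legally applied.

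The main technical obstacle is the hypercontractive estimate itself: establishing it with the sharp exponent $k/2$ and a constant depending only on $k$ (rather than on the dimension of $G$ or on the degrees of the individual monomials) requires the full Nelson bound together with the orthogonal chaos decomposition, since a naive application of Bonami--Beckner to monomials would pick up the ambient dimension. All other steps -- the Markov inequality and the optimization over $q$ -- are routine calculus. I would also double-check that the triangle-inequality loss in summing the chaos components only affects $c_k$, not the exponent $2/k$, so that the tail rate stated in the proposition is preserved.
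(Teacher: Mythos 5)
The paper gives no proof of this proposition; it cites it to Janson's book, and your proposal is exactly the standard hypercontractivity-plus-Markov argument that appears there, so the approach matches the cited source. Two small slips, neither fatal to the strategy: the choice $q-1 = e^{-1}t^{2/k}$ does not give $(q-1)^{k/2} = t/e$ unless $k=2$ (you want $q-1=(t/e)^{2/k}$); and with that corrected choice, $q\ge 2$ requires $t\ge e$, not $t\ge 2$, so the range $t\in[2,e)$ must be handled separately (e.g., take $q=2$ there, giving $\P\le 1/t^2 \le \exp(-c_k t^{2/k})$ for small enough $c_k$). Also, for the hypercontractive moment bound you can avoid the triangle-inequality loss entirely by applying Nelson's inequality to the dilated polynomial $\tilde p = \sum_j e^{jt}p_j$, since $P_t\tilde p = p$ and $\|\tilde p\|_2 \le e^{kt}\|p\|_2$ by orthogonality of the chaoses; this gives $\|p\|_q\le(q-1)^{k/2}\|p\|_2$ cleanly.
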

From Proposition \ref{prop:gaussian_concentration} , we can deduce a general bound for quadratic forms in Gaussians. For that, if $M$ is a matrix, we introduce the notation $\|M\|_{\max}:= \max_{i,j}|M_{i,j}|$.
\begin{lemma}\label{lm:typical}
Let $\bm{x} \in \mathbb{R}^{n}$ be a fixed vector with $\|\bm{x}\|_2 \le n^{1/2} \log^{c_1} n$, $\|\bm{x}\|_4 \le n^{1/4} \log^{c_1} n$ and $G \in \mathbb{R}^{n \times p}$ be a standard Gaussian matrix, for some fixed constants $c_1$ and $p$. Then there exists a constant $c_2$ such that $\|G\|_{\max} \le \log^{c_2} n$ and
\[\left\| G^T \mathrm{diag}(\bm{x})^k G - \E\big[ G^T \mathrm{diag}(\bm{x})^k G \big] \right\|_{\max} \le \sqrt{n} \log^{c_2} n\]
for $k = 0,1,2$ with probability $1 - n^{-\log n}$.
\end{lemma}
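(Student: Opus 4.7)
The plan is to apply Proposition \ref{prop:gaussian_concentration} to each entry of $G^{T}\mathrm{diag}(\bm{x})^{k}G$ separately, since its $(a,b)$-entry is the degree-two Gaussian polynomial
\[ Y_{a,b}^{(k)} := \sum_{i=1}^{n} x_{i}^{k}\, G_{i,a} G_{i,b}. \]
The bound $\|G\|_{\max}\le \log^{c_{2}} n$ is the easy half: each of the $np$ entries of $G$ is a standard Gaussian, so the Gaussian tail inequality combined with a union bound gives $\P(\|G\|_{\max} > C\log n) \le 2np\, e^{-C^{2}(\log n)^{2}/2}$, which is at most $\tfrac{1}{2} n^{-\log n}$ once $C$ is chosen large enough (uniformly in $n$).

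For the quadratic-form statement I fix $a,b\in [p]$ and $k\in\{0,1,2\}$ and compute $\mathrm{Var}(Y_{a,b}^{(k)})$ directly. Treating the cases $a=b$ and $a\ne b$ separately and using standard Gaussian moment formulas, one obtains
\[ \mathrm{Var}(Y_{a,b}^{(k)}) \;\le\; 2\sum_{i=1}^{n} x_{i}^{2k}. \]
The three exponents $2k\in\{0,2,4\}$ correspond, respectively, to $n$, $\|\bm{x}\|_{2}^{2}$, and $\|\bm{x}\|_{4}^{4}$, each of which is bounded by $n\,\log^{C}n$ for some $C=C(c_{1})$ by the hypotheses on $\bm{x}$. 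Hence the standard deviation is at most $\sqrt{n}\,\log^{C'} n$.

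I then apply Proposition \ref{prop:gaussian_concentration} to the mean-zero degree-two polynomial $Y_{a,b}^{(k)}-\E[Y_{a,b}^{(k)}]$, which yields a tail of the form $\exp(-c\,t)$ for $t\ge 2$ (the case $k=2$ of the proposition). Choosing $t = c^{-1}\log^{2}n$ makes the tail probability equal to $e^{-\log^{2}n} = n^{-\log n}$ while multiplying the standard deviation by a $\log^{2}n$ factor, producing the desired entrywise bound $\sqrt{n}\,\log^{c_{2}}n$ after absorbing constants into a larger $c_{2}$. A final union bound over the $O(p^{2})$ entries of the $p\times p$ matrix, over $k\in\{0,1,2\}$, and with the $\|G\|_{\max}$ estimate above changes the failure probability by only a constant factor (since $p$ is fixed), which can again be absorbed into $c_{2}$.

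The main ``obstacle'' here is really only careful bookkeeping of polylogarithmic powers: there is no genuine difficulty because $p$ is a fixed constant, the norm assumptions on $\bm{x}$ force the variances to be $n$ up to logarithmic factors, and the conversion of a degree-two Gaussian tail into an $n^{-\log n}$ failure probability is immediate. If I wanted to track the constant $c_{2}$ explicitly, the only nontrivial dependence comes from the exponent $4c_{1}$ appearing in the bound on $\|\bm{x}\|_{4}^{4}$, combined with the $\log^{2}n$ factor produced by the concentration step.
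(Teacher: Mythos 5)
Your proposal is correct and takes essentially the same approach as the paper: compute the variance of each entry of $G^T\mathrm{diag}(\bm{x})^k G$ (which equals $(1+\delta_{ab})\sum_i x_i^{2k}$, bounded by $n\log^{O(1)}n$ via the hypotheses on $\|\bm{x}\|_2$ and $\|\bm{x}\|_4$), apply Proposition \ref{prop:gaussian_concentration} to each mean-zero quadratic form with $t$ of order $\log^2 n$, and conclude with a union bound over the finitely many entries and over the $np$ Gaussian entries. The paper's proof is just a terser version of exactly this argument, citing the variance identity without computation and leaving the union bound implicit.
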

\begin{proof}
By repeated application of Proposition \ref{prop:gaussian_concentration}, it suffices to show that for an arbitrary entry of $G^T \mathrm{diag}(\bm{x})^k G$, its variance is at most $n \log^{c_3} n$ for some constant $c_3$. This indeed follows immediately from the identity $\mathrm{Var}[(G^T D G)_{ij}]= (1+ \delta_{ij})\|D\|_F^2$. 
\end{proof}
As a typical example of Lemma \ref{lm:typical}, suppose that $\bm{\Lambda}$ satisfies the spectral growth condition, as in Definition \ref{spec_growth}, and is normalized as in Corollary \ref{cor:conver}. Then, if $\bg$ and $\hg$ are two independent standard Gaussians, we may consider $G$ in Lemma \ref{lm:typical} to have $\bg$ and $\hg$ as its two columns. In that case the Lemma implies the following bounds with high probability,
\begin{equation} \label{eq:specconc}
    \langle \bg,\bg\rangle - n|, \quad|\langle \bg,\hg\rangle|,\quad |\langle \bg, \mathrm{diag}(\bm{\Lambda})\bg\rangle|,\quad |\langle \bg, \mathrm{diag}(\bm{\Lambda})\hg\rangle| \leq \sqrt{n}\log^{c_2}n,
\end{equation}|
where we have used that the spectral growth condition implies $\|\bm{\Lambda}\|_4 \leq n^{1/4}\log^{c_1}n$ for some $c_1\geq 0$.
Of course, the Lemma applies equally well to any finite sequence of independent standard Gaussians.

\subsection{Proofs of corollaries}
Having defined the necessary notions, we can now explain how to derive the Corollary \ref{cor:conver}and Corollary \ref{cor:disprove} from Theorem \ref{thm:main}.
\begin{proof}[Proof of Corollary \ref{cor:conver}]
Suppose that $\bm{\Lambda}_n = \bm{\Lambda}$ satisfies
\begin{enumerate}
    \item  the assumption of Theorem \ref{thm:main},
    \item $\mathrm{avg}(\bm{\Lambda}_{n}) = 0$, $\mathrm{std}(\bm{\Lambda}_n) = 1$ with probability $1$, and
    \item there exist some probability measure $\rho$ on $\R$ such that
$$\mathrm{emp}(\bm{\Lambda}_n) \underset{\mathrm{a.s.}}{\longrightarrow} \rho, \quad \text{as $n\to\infty$}.$$
\end{enumerate}
Let $x_n=\E[\bm{\Lambda}_{n}]$ and let $\varphi$ be a bounded Lipschitz function on $\R$, with bound $M$ and Lipschitz constant $C$. The above convergence implies $\frac{1}{n}\sum_{k=1}^{n}\varphi(x_{n,k})\to\int\varphi d\rho$. For $A\in\mathrm{OE}_n(p(\bm{\Lambda}_n))$, the normalization and Theorem \ref{thm:main} gives \begin{equation*}
    \E \left[\phi_{\mathcal{N}}(A,k)\right] =  x_{n,k}  + \tilde O\left(n^{-1}\right),\quad\text{and}\quad \mathrm{Var}\left[\phi_{\mathcal{N}}(A,k)\right]=\tilde O\left(n^{-1/2}\right),
\end{equation*}
uniformly on $k$.
Fix $0<\eps<\frac{1}{4}$, Chebyshev's inequality (and the triangle inequality) now implies  $$\P[|\phi_{\mathcal{N}}(A,k)-x_{n,k}|>n^{-\eps}]=\tilde{O}(n^{2\eps-\frac{1}{2}})\le \delta_{n},$$
for some sequence $\delta_{n}\to 0$ (independent of $k$). For any $t>0$, Markov's inequality (and the triangle inequality) gives
\begin{align*}
    \P\left[\left|\frac{1}{n}\sum_{k=1}^n\varphi(\phi_{\mathcal{N}}(A,k))-\varphi(x_{n,k})\right|>t\right]\le & \frac{1}{t}\frac{1}{n}\sum_{k=1}^n\E\bigg[\big|\varphi(\phi_{\mathcal{N}}(A,k))-\varphi(x_{n,k})\big|\bigg]\\
    \le & \frac{1}{t}(Cn^{-\eps}+2M\delta_{n})\to 0.    
\end{align*}
In particular, the estimates continues to hold for $t = \frac{1}{\sqrt{n}^\eps}$. So, $\frac{1}{n}\sum_{k=1}^{n}\varphi(x_{n,k})\to\int\varphi d\rho$, and we conclude that
\[\P\left[\frac{1}{n}\sum_{k=1}^n\varphi(\phi_{\mathcal{N}}(A,k))\to \int\varphi d\rho\right]=1.\]
Using the Portmanteau Theorem (to go from any bounded Lipschitz to any bounded continuous) we conclude that $\mathrm{emp}(\phi_{\mathcal{N}}(A)) \underset{\mathrm{a.s.}}{\longrightarrow} \rho$, as needed. 

Since $\phi_{\mathcal{N}}(A,k))-\sigma_{\mathcal{N}}(A,k))=c(k-1)\frac{\sqrt{n}}{{n \choose 2}}$ is deterministic and is uniformly bounded by $\frac{c'}{\sqrt{n}}$ , the same argument gives $\mathrm{emp}(\sigma_{\mathcal{N}}(A)) \underset{\mathrm{a.s.}}{\longrightarrow} \rho$
\end{proof}
Corollary \ref{cor:disprove} follows immediately from Corollary \ref{cor:conver}, Wigner's Semi-Circle Theorem \eqref{eq:wigner}, and the following two facts:
\begin{enumerate}
    \item if $A$ is GOE then $A_{\mathrm{sgn}}$ is GOE,
    \item if $\mathrm{emp}(\sigma_{\mathcal{N}}(A)) \underset{\mathrm{a.s.}}{\longrightarrow} \rho$ and $\rho$ is symmetric around $0$, then $\mathrm{emp}_{\mathcal{N}}(\sigma(A)) \underset{\mathrm{a.s.}}{\longrightarrow} \rho$. 
\end{enumerate}\qed
\section{Signs of Quadratic Forms on the Orthogonal Group}
Our main tool for the proof of Theorem \ref{thm:main} is a precise estimate on the possible correlation between the signs of certain quadratic forms on the orthogonal group. As mentioned, the main difficulty in handling such integrals lies in the inclusion of the sign function. To address this difficulty, we reduce the computation on the orthogonal group to a computation in Gaussian space. For Gaussian variables, there is an explicit formula for the correlation of signs, sometimes called Grothendieck's identity or Sheppard's formula.
\begin{lemma}\label{lem: signArcsine}
Let $(X,Y)\in \R^2$ be a centered Gaussian vector with unit variances and $\E[XY] = \rho$.
Then
\[
\E\!\big[\mathrm{sgn}(X)\,\mathrm{sgn}(Y)\big]
=\frac{2}{\pi}\arcsin(\rho)=\frac{2}{\pi}\left(\rho+\frac{\rho^3}{6}+\frac{3\rho^5}{40}+O(\rho^7)\right).
\]
\end{lemma}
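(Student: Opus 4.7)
The plan is to reduce the problem to a geometric computation on the unit circle by exploiting the rotational symmetry of the standard Gaussian on $\R^2$. First I would decorrelate by writing $Y = \rho X + \sqrt{1-\rho^2}\, V$, where $V$ is a standard Gaussian independent of $X$. Setting $U := X$, the pair $(U,V)$ is a standard Gaussian vector on $\R^2$, so its polar decomposition $(U,V) = R(\cos\Theta,\sin\Theta)$ has $\Theta \sim \Unif[0,2\pi)$ independent of the radius $R \geq 0$.

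Next, I would parameterize $\rho = \cos\varphi$ with $\varphi := \arccos(\rho) \in (0,\pi)$, so that $\sqrt{1-\rho^2} = \sin\varphi$. Then
\[
Y = R\bigl(\cos\varphi\cos\Theta + \sin\varphi\sin\Theta\bigr) = R\cos(\Theta - \varphi),
\qquad X = R\cos\Theta.
\]
Since $R > 0$ almost surely, the signs of $X$ and $Y$ depend only on $\Theta$, and
\[
\mathrm{sgn}(X)\,\mathrm{sgn}(Y) = \mathrm{sgn}(\cos\Theta)\,\mathrm{sgn}(\cos(\Theta-\varphi)).
\]

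The key computation is then purely combinatorial on the circle: the set of $\Theta \in [0,2\pi)$ where $\cos\Theta$ and $\cos(\Theta-\varphi)$ have opposite signs consists of two arcs, symmetrically placed around $\pi/2$ and $3\pi/2$, each of length $\varphi$. Since $\Theta$ is uniform,
\[
\E\bigl[\mathrm{sgn}(X)\mathrm{sgn}(Y)\bigr] = \frac{1}{2\pi}\bigl((2\pi - 2\varphi) - 2\varphi\bigr) = 1 - \frac{2\varphi}{\pi} = 1 - \frac{2\arccos\rho}{\pi} = \frac{2}{\pi}\arcsin(\rho),
\]
using $\arcsin\rho + \arccos\rho = \pi/2$. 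The Taylor expansion $\arcsin(\rho) = \rho + \rho^3/6 + 3\rho^5/40 + O(\rho^7)$ gives the asymptotic expansion stated.

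There is no real obstacle here: everything follows from the rotational symmetry of the standard Gaussian and an elementary arc-length count. The only minor point worth being careful about is verifying that the two "bad" arcs are indeed disjoint (which holds because $\varphi \in (0,\pi)$), and handling the boundary cases $\rho = \pm 1$, which can be obtained by continuity if desired, though they are excluded by the hypothesis $\rho \in (-1,1)$.
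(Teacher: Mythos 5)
Your proof is correct and is the standard geometric argument: decorrelate, pass to polar coordinates so that only the uniform angle $\Theta$ matters, and count the arc length where the two cosines disagree in sign. The paper does not spell this out but defers to Goemans--Williamson (Lemma 3.2 of \cite{goemans1995improved}), whose proof is this same rotational-invariance/arc-length computation, so your approach matches the intended one.
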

The reader can find a proof in \cite[Lemma 3.2]{goemans1995improved} for example, while the identity $\frac{2}{\pi}\arcsin(\rho)=\frac{2}{\pi}\left(\rho+\frac{\rho^3}{6}+\frac{3\rho^5}{40}+O(\rho^7)\right)$ is an immediate consequence of the Taylor expansion.
 We can now state and prove our main technical tool.

\begin{proposition}\label{prop:main_prop}
Let $B$ be a real $n\times n$ matrix and let $P$ be a rank-one orthogonal projection.
Assume that $\|(\mathrm{I}-P)B(\mathrm{I}-P)\|_{F}^2=n+\tilde{O}(1)$ and that the operator norms
$\|B\|$, and $\|B^{T}P\|$ are $\tilde{O}(1)$. Further assume $\Tr(B)=\tilde{O}(1)$.
Let $\bu,\hu$ denote the first two columns of a Haar-random orthogonal matrix in $\mathcal{O}(n)$. Then
\[
\E\!\left[\mathrm{sgn}\!\big(\langle P\bu,\hu\rangle\,
\langle B\bu,\hu\rangle\big)\right]
=\frac{2^{3/2}}{\pi^{3/2}}\Tr(PB)n^{-1/2}+\tilde{O}(n^{-3/2}).
\]
\end{proposition}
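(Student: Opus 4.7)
The plan is to reduce the expectation on $\mathcal{O}(n)$ to a Gaussian integral, apply Sheppard's formula (Lemma \ref{lem: signArcsine}) to turn the expectation of a sign product into an arcsine of a correlation, and then carefully expand that correlation in the small fluctuations of $\|\bg\|^2, \|B\bg\|^2, \langle\bg, B\bg\rangle$ around their means.

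First, by orthogonal invariance of the Haar measure and of the quantity in question under conjugation $B \mapsto O^T B O$, $P\mapsto O^T P O$, I may assume $P = e_1 e_1^T$. I use the standard representation of the first two Haar columns in terms of two independent standard Gaussians $\bg, \hg \in \R^n$:
\[ (\bu, \hu) \stackrel{\mathrm{law}}{=} \Bigl(\tfrac{\bg}{\|\bg\|}, \tfrac{Q\hg}{\|Q\hg\|}\Bigr), \qquad Q := \mathrm{I}_n - \tfrac{\bg\bg^T}{\|\bg\|^2}.\]
Because $\mathrm{sgn}(\langle Pu, v\rangle \langle Bu, v\rangle)$ is invariant under independent positive rescalings of $u$ and $v$, we may replace $\bu$ by $\bg$ and $\hu$ by $Q\hg$. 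Using $Q = Q^T = Q^2$, the product of inner products inside the sign becomes $g_1 \langle Qe_1, \hg\rangle \langle QB\bg, \hg\rangle$.

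Next I condition on $\bg$. Given $\bg$, the two linear forms $X := \langle Qe_1, \hg\rangle/\|Qe_1\|$ and $Y := \langle QB\bg, \hg\rangle/\|QB\bg\|$ are standard Gaussians in $\hg$ with correlation
\[ \mathrm{sgn}(g_1)\rho = \frac{\|\bg\|^2 (B\bg)_1 - g_1 \langle \bg, B\bg\rangle}{\sqrt{(\|\bg\|^2 - g_1^2)\bigl(\|\bg\|^2\|B\bg\|^2 - \langle \bg, B\bg\rangle^2\bigr)}},\]
using $\|Qx\|^2 = \|x\|^2 - |\langle x, \bg\rangle|^2/\|\bg\|^2$ and $\langle Qe_1, QB\bg\rangle = \langle Qe_1, B\bg\rangle = (B\bg)_1 - g_1\langle\bg,B\bg\rangle/\|\bg\|^2$. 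Sheppard's formula then gives $\E_{\hg}[\mathrm{sgn}(XY) \mid \bg] = \tfrac{2}{\pi}\rho + O(\rho^3)$, so it suffices to show $\E_{\bg}[\rho] = \sqrt{2/\pi}\,\Tr(PB) n^{-1/2} + \tilde O(n^{-3/2})$ and that the $O(\rho^3)$ contributions are absorbed in $\tilde O(n^{-3/2})$.

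To expand $\rho$, I introduce the normalized fluctuations $\eps := (\|\bg\|^2 - n)/n$, $\delta := (\|B\bg\|^2 - n)/n$, $\eta := \langle \bg, B\bg\rangle/n$, and $Z := (B\bg)_1 = \langle B^T e_1, \bg\rangle$. By the hypotheses $\|(\mathrm{I}-P)B(\mathrm{I}-P)\|_F^2 = n + \tilde O(1)$, $\Tr(B) = \tilde O(1)$, $\|B\|, \|B^T P\| = \tilde O(1)$, and Lemma \ref{lm:typical} (applied with $k=0,1,2$ using that the eigenvalues of $(B+B^T)/2$ satisfy the spectral growth bound), each of $\eps, \delta, \eta$ is $\tilde O(n^{-1/2})$ and $g_1, Z$ are $\tilde O(1)$ on an event of probability at least $1 - n^{-\log n}$. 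A straightforward Taylor expansion of the square root then yields, on this good event,
\[ \rho = n^{-1/2}\,\mathrm{sgn}(g_1)\bigl(Z - \eta g_1 - Z(\eps + \tfrac{\delta}{2})\bigr) + \tilde O(n^{-3/2}),\]
and in particular $\rho = \tilde O(n^{-1/2})$, which also controls $\rho^3 = \tilde O(n^{-3/2})$. On the bad event the integrand is bounded by $1$, contributing at most $n^{-\log n}$.

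The final step is to integrate over $\bg$ by splitting $\bg = g_1 e_1 + \bw$ with $\bw \perp e_1$ independent of $g_1$. The symmetry $\bw \mapsto -\bw$ kills several cross terms, and the assumptions yield $\E_{\bw}[Z] = B_{11} g_1$, $\E_{\bw}[Z\eps] = \tilde O(n^{-1})$, $\E_{\bw}[Z\delta] = \tilde O(n^{-1})$, and $\E_{\bw}[\eta g_1] = \tilde O(n^{-1})$. Hence the leading contribution to $\E_{\bg}[\mathrm{sgn}(g_1) \rho]$ is $n^{-1/2}B_{11}\,\E[|g_1|] = \sqrt{2/\pi}\,\Tr(PB) n^{-1/2}$, and multiplying by the $2/\pi$ from Sheppard's formula gives the claimed constant $2^{3/2}/\pi^{3/2}$.

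The main obstacle will be book-keeping: ensuring that every Taylor remainder and every cross expectation after integrating over $\bw$ truly is $\tilde O(n^{-3/2})$, and in particular that the quadratic corrections involving $\eps, \delta, \eta, g_1^2/n$ either integrate to zero by symmetry or produce a $\Tr$-type quantity that the hypotheses force to be $\tilde O(1)$. Once these cancellations are verified, the statement follows by combining the Taylor expansion with Sheppard's identity and the concentration bound from Lemma \ref{lm:typical}.
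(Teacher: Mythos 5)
Your proposal follows the paper's proof essentially step for step: the same Gaussian representation of the first two Haar columns via $\bg, \hg$ and the projection $Q$, the same conditioning on $\bg$ to reduce to a bivariate Gaussian and invoke Sheppard's identity, the same fluctuation variables $\eps,\delta,\eta,Z$ with the same Taylor expansion of $\rho$, and the same final decomposition $\bg = g_1 e_1 + \bw$ with symmetry in $\bw$ to isolate the leading term $n^{-1/2}B_{11}\E|g_1|$. The only difference is expository: the paper carries out the book-keeping you flag as the remaining obstacle, but your outline identifies every cancellation and bound the paper uses.
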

Before proving the proposition, we mention that for symmetric matrices, with no loss of generality one can assume that $B$ is a diagonal. In that case, we can take $P$ to be the orthogonal projection on a coordinate $e_k$, for some $k = 1,\dots,k$, which corresponds to an eigenspace of $B$ with, say, eigenvalue $\lambda_k$. With these choices, as long as $A$ is drawn from an orthogonally invariant ensemble, we shall show that
$$\E[\phi(A,k)] = \E\left[\mathrm{sgn}\!\big(\langle P\bu,\hu\rangle\,
\langle B\bu,\hu\rangle\big)\right],$$
where $B$ is the diagonalization of $A$. Thus Proposition \ref{prop:main_prop} provides a direct formula for the expected value of $\phi(A,k)$. Building in this representation and the ideas that will appear in the proof, we shall also use this formula to control the variance of $\phi(A,k)$.
\begin{proof}[Proof of Proposition \ref{prop:main_prop}]
Our proof is conducted in several steps. We first prove an analog result for quadratic forms in Gaussians. We then explain how to reduce, through conditioning, the computation on the orthogonal group to Gaussian space. We then estimate the error by integrating over the conditioned variables.

\emph{Step 0: Introduce the Gaussian setting.} Let $\bg$ and $\hg$ be two independent standard Gaussian vectors in $\R^n$. By rotational invariance of $(\bg,\hg)$, we may and will assume $\proj=e_{1}e_{1}^T$ is the projection onto the first coordinate. 
The well-known construction of the first two Haar columns is
\[
(\bu,\hu) \;\stackrel{\mathrm{law}}=\; \left(\frac{\bg}{\|\bg\|}, \frac{Q\,\hg}{\|Q\,\hg\|}\right),
\quad
Q=Q_{\bg}:=\mathrm{I}_n-\frac{\bg\bg^\top}{\|\bg\|^2}.
\]
The map
$(u,v)\mapsto \mathrm{sgn}\big(\langle Pu,v\rangle\,\langle Bu,v\rangle\big)$
is invariant under \emph{independent scalings} of $u$ and $v$.
Hence
\[
\mathrm{sgn}\!\big(\langle P\bu,\hu\rangle\,\langle B\bu,\hu\rangle\big)
\ \stackrel{\mathrm{law}}=\ 
\mathrm{sgn}\!\big(\langle P\bg,Q\hg\rangle\,\langle B\bg,Q\hg\rangle\big)=\ 
\mathrm{sgn}\!\big(\, g_{1}\,\langle Qe_1,\hg\rangle\,\langle QB\bg,\hg\rangle\big).
\]

\emph{Step 1: Condition on $\bg$ and reduce to two linear forms of $\hg$.}
Fix $\bg$ and set
\[
u:=\mathrm{sgn}(g_1)Qe_1,\qquad v:=QB\bg, \qquad
X:=\frac{\langle u,\hg\rangle}{\|u\|},\qquad
Y:=\frac{\langle v,\hg\rangle}{\|v\|}.
\]
For almost any value of $\bg$, $(X,Y)\mid\,\bg$ is bivariate normal, centered, with
\[
\Var(X\mid \bg)=1,\quad \Var(Y\mid \bg)=1,\quad
\rho:=\Cov(X,Y\mid \bg)=\frac{\langle u,v\rangle}{\|u\|\|v\|}=\mathrm{sgn}(g_1)\frac{\langle Qe_1,QB\bg\rangle}{\|Qe_1\|\|QB\bg\|},
\]
so that, by Lemma Lemma \ref{lem: signArcsine},
\[\E_{\hg}\big(\mathrm{sgn}\!\big(\langle P\bu,\hu\rangle\,\langle B\bu,\hu\rangle\big)\big)=\E_{\hg}\big(\mathrm{sgn}(X)\mathrm{sgn}(Y)\big)=\frac{2}{\pi}\left(\rho+\frac{\rho^3}{6}\right)+O(\rho^5).\]
It is now a matter of bounding the random correlation \(\rho\). Notice that $\langle Qe_1,QB\bg\rangle=\langle Qe_1,B\bg\rangle$ due to $Q^2=Q=Q^T$. The fact that $\|Qx\|^2=\|x\|^2-\frac{|\langle x,\bg\rangle|^2}{\|\bg\|^2}$ for any $x\in\R^n$ allows to write
\[
\mathrm{sgn}(g_{1})\rho
=\frac{\|\bg\|^2\langle Qe_1,B\bg\rangle}{\sqrt{(\|\bg\|^2-g_{1}^2)(\|\bg\|^2\|B\bg\|^2-|\langle \bg,B\bg\rangle|^2)}}
=\frac{\|\bg\|^2(B\bg)_1-g_{1}\langle \bg,B\bg\rangle}{\sqrt{(\|\bg\|^2-g_{1}^2)(\|\bg\|^2\|B\bg\|^2-|\langle \bg,B\bg\rangle|^2)}}.
\]
Define the fluctuations 
\begin{equation} \label{eq:errors}
   \eps:=\frac{\|\bg\|^2-n}{n},\quad \delta:=\frac{\|B\bg\|^2-n}{n},\quad \eta:=\frac{\langle\bg,B\bg\rangle}{n}.
\end{equation}
Notice that $\E[\eps]=0,\ \E[\delta]=n^{-1}(\|B\|_{F}^2-n),$ and $\E[\eta]=n^{-1}\Tr(B)$. The assumptions that $\|B\|_{F}^2-n=O(1)$ and $\Tr(B)=\tilde{O}(1)$, together with the Gaussian concentration from Lemma \ref{lm:typical}, show that with high probability (e.g. $1-n^{-\log n}$) the random variables $\eps,\delta,\eta$ are $\tilde{O}(n^{-1/2})$. Define the Gaussian 
\[Z=(B\bg)_1=\langle B^Te_1,\bg\rangle.\]
Notice that $\|B^Te_1\|=\|B^TPe_1\|\le\|B^TP\|=\tilde{O}(1)$ by assumption. So $g_1$ and $Z$ are centered Gaussians with variances $1$ and $\tilde{O}(1)$ respectively. In particular, with high probability $g_1$ and $Z$ are $\tilde{O}(1)$. We may conclude that with high probability (e.g. $1-n^{-\log n}$),
\begin{align*}
\mathrm{sgn}(g_{1})\rho=& \frac{n\,(1+\eps)Z-n\,\eta\, g_1}{\sqrt{\big(n\,(1+\eps)-g_1^2\big)\big(n^2(1+\eps)(1+\delta)-n^2\eta^2\big)}}\\
= &    n^{-1/2}\frac{Z-\eta g_1+\rem}{\sqrt{\big(1+\eps+g_{1}^2\,n^{-1}\big)\big(1+\eps+\delta+\eps\,\delta-\eta^2)\big)}}\\
= &    n^{-1/2}\frac{Z-\eta g_1+\rem}{\sqrt{1+2\eps+\delta+g_{1}^2\,n^{-1}+\eps\,\delta-\eta^2+\rem}}\\
= &    n^{-1/2}\big(Z-\eta g_1\big)\big(1-\frac{1}{2}(2\eps+\delta+g_{1}^2\,n^{-1}+\eps\,\delta-\eta^2)+\frac{3}{8}(2\eps+\delta)^2+\rem\big)\\
\\
= &    n^{-1/2}\big[Z\big(1-\eps-\frac{\delta}{2}-\frac{1}{2}(g_{1}^2\,n^{-1}+4\eps\,\delta-\eta^2+3\eps^2+\delta^{2}/4)\big)-\eta g_{1}\big(1-\eps-\frac{\delta}{2}\big)\big]+\tilde{O}(n^{-2})\\
= &    n^{-1/2}Z-n^{-1/2}\big(Z(\eps+\frac{\delta}{2})+\eta g_{1}\big)+W+\tilde{O}(n^{-2}),
\end{align*}
where $W$ is with high probability $\rem$
\[W:=n^{-1/2}\big(\eta g_1(\eps+\frac{\delta}{2})-Z\frac{1}{2}(g_{1}^2\,n^{-1}+4\eps\,\delta-\eta^2+3\eps^2+\delta^{2}/4)\big).\]
In particular, with probability at least  $\ 1-n^{-\log n}$,
\[\rho=\tilde{O}(n^{-1/2}),\quad\text{and} \quad\rho=n^{-1/2}\mathrm{sgn}(g_{1})\left(Z-Z\eps-Z\frac{\delta}{2}+\eta g_1\right)+\rem. \]
Since $|\mathrm{sgn}| \leq 1$, we can now calculate up to $\rem$,
\begin{align*}
    \E\!\left[\mathrm{sgn}\!\big(\langle P\bu,\hu\rangle\,
\langle B\bu,\hu\rangle\big)\right]
= & \frac{2}{\pi}\E_{\bg}[\rho]+\rem.
\end{align*}
\emph{Step 2: Integration over $\bw=(\mathrm{I}_{n}-e_{1}e_{1}^{T})\bg$.}
To do so, we let $\bw=(0,g_{2},\ldots,g_{n})$, so that its $n-1$ non-zero entries form a standard $n-1$ dimensional Gaussian independent of $g_1$, and we first integrate over $\bw$. We can write 
\[\eps=\frac{\|\bw\|^2+g_{1}^2-n}{n},\quad \delta:=\frac{\|B\bw\|^2+g_{1}B_{11}-n}{n},\quad \eta:=\frac{B_{11}g_{1}^2+\langle\bw,B\bw\rangle+\langle\bw,Be_{1}\rangle+\langle e_{1},B\bw\rangle}{n},\]
and
\[Z=B_{11}g_{1}+\langle B^{T}e_{1},\bw\rangle.\]
Notice that by our assumptions, $\E_{\bw}\|B\bw\|^2=\|(\mathrm{I}-P)B(\mathrm{I}-P)\|_{F}^2=n+\tilde{O}(1)$ and $\E_{\bw}\langle \bw,B\bw\rangle=\Tr((\mathrm{I}-P)B(\mathrm{I}-P))=\Tr(B)-\Tr(PB)=\tilde{O}(1)$. Using these observations and the symmetry $\bw\mapsto-\bw$, we get  
\begin{align*}
\E_{\bw}[Z]=& B_{11}g_{1},\\
\E_{\bw}[Z\eps]=& \E_{\bw}[B_{11}g_{1}\eps]=\frac{B_{11}g_{1}(g_{1}^2-1)}{n}=\tilde{O}(1/n),\\
\E_{\bw}[Z\delta]=& \E_{\bw}[B_{11}g_{1}\delta]=\frac{B_{11}g_{1}(B_{11}g_{1}+\tilde{O}(1))}{n}=\tilde{O}(1/n),\\
\E_{\bw}[\eta]=& \frac{B_{11}g_{1}^2+\tilde{O}(1)}{n}=\tilde{O}(1/n).
\end{align*}
We conclude that,
\begin{align*}
\E_{\bg}[\rho]=\E_{g_{1}}\left[\E_{\bw}(\rho)\right]= & n^{-1/2}\Tr(PB)\E_{g_{1}}\left[|g_{1}|\right]+\rem\\ = & \sqrt{\frac{2}{\pi}}n^{-1/2}\Tr(PB)+\rem,  
\end{align*}
and therefore
\begin{equation*}
 \E\!\left[\mathrm{sgn}\!\big(\langle P\bu,\hu\rangle\,
\langle B\bu,\hu\rangle\big)\right]
=  \frac{2}{\pi}\E_{\bg}[\rho]+\rem
= \frac{2^{3/2}}{\pi^{3/2}}\Tr(PB)n^{-1/2}+\rem. 
\end{equation*}
This proves the claim.
\end{proof}

\section{Statistics of the Nodal Count}
In this section, we employ Proposition \ref{prop:main_prop} and prove Theorem \ref{thm:main}. Recall the setting:  $A=\Phi\mathrm{diag(\bm\Lambda)\Phi^T}$, where $\Phi \sim \mathrm{Haar}(\mathcal{O}(n))$, and $\bm{\Lambda}=(\lambda_{1},\ldots,\lambda_{n})$ can be random, but $\Phi$ and $\bm{\Lambda}=(\lambda_{1},\ldots,\lambda_{n})$ are independent. We can and will assume without loss of generality that $\av(\bm\Lambda)=0$ and $\std(\bm\Lambda)=1$ with probability $1$. We also assume the spectral growth bound from Definition \ref{spec_growth}, according to which $\|\bm\Lambda\|_{\infty}=\tilde{O}(1)$ with probability $1-\rem$. Our goal is to prove Theorem \ref{thm:main}, which states that 
\begin{align}
    \E_{\Phi,\bm\Lambda}[\phi(A,k)]= & { n \choose 2}\left(\frac{1}{2}+\frac{\sqrt{2}}{\pi^{3/2}}\E_{\bm\Lambda}[\lambda_{k}]n^{-1/2}+\rem\right),\quad\text{and}\\
    \Var_{\Phi,\bm\Lambda}[\phi(A,k)]= &\ \tilde{O}(n^{5/2}).
\end{align}

The first part of the proof is a simple reduction based on the permutation symmetry of rows in $\mathcal{O}(n)$. Let $\Phi_{j}=(\Phi_{j,1},\ldots,\Phi_{j,n})$ denote the $j$-th row of $\Phi$, let $P_{k}=e_{k}e_{k}^{T}$ be the projection on the $k$-th coordinate, and for $i<j$ and $r<s$ define the random variable
\[M_{ij}:=A_{ij}\Phi_{i,k}\Phi_{j,k}=\langle\Phi_i,P_{k}\Phi_{j}\rangle\,\langle\Phi_{i},\diag(\bm\Lambda)\Phi_j\rangle,\]
so that
\[\phi(A,k)=\frac{1}{2}\sum_{i<j}(1+\mathrm{sign}(M_{ij})).\]
Notice that the permutation symmetry of rows in $\mathcal{O}(n)$ gives  
\begin{align*}
     M_{ij}\stackrel{\mathrm{law}}{=} & M_{12},\\
     M_{ij}M_{rs}\stackrel{\mathrm{law}}{=} & \begin{cases}M_{12}^2 &\ (i,j)=(r,s)\\
     M_{12}M_{34} &\ i\ne r,j\ne s\\
     M_{13}M_{23} &\ i\ne r,j= s\quad  \text{or}\quad i= r,j\ne s\end{cases}.
\end{align*}
Linearity of expectation and bilinearity of covariance allow us to conclude
\begin{align*}
    \E[\phi(A,k)]= &\ \frac{1}{2}{n \choose 2}\left(1+\E[\mathrm{sign}(M_{12})]\right),\quad\text{and}\\
    \Var(\phi(A,k))= &\ \Cov\left(\frac{1}{2}\sum_{i<j}\mathrm{sign}(M_{ij}),\ \frac{1}{2}\sum_{r<s}\mathrm{sign}(M_{rs})\right)\\
    = &\  \frac{1}{4} {n \choose 2} \Var( \mathrm{sgn}\left(M_{12} \right))+ \frac{1}{2} (n-2) {n \choose 2} \Cov\left( \mathrm{sgn}(M_{13}),\mathrm{sgn}(M_{23}))\right)\\
    & +2 {n \choose 2} {n-2 \choose 2}\Cov(\mathrm{sgn}\left(M_{12} M_{34} \right)).
\end{align*}
We now prove Theorem \ref{thm:main} based on three Lemmas. The second part of the proof, which is the technical part, is to prove these Lemmas. 
\begin{lemma}[Expected value for an edge]\label{lm:expectedvalue} Let $\bm\Lambda$ be fixed, and assume it satisfies $\av(\bm\Lambda)=0$, $\std(\bm\Lambda)=1,$ and $\|\bm\Lambda\|_{\infty}=\tilde{O}(1)$. Then, 
$$ \E_{\Phi}\left[ \mathrm{sgn}\left(M_{12} \right)\right] = \frac{2^{3/2}}{\pi^{3/2}}\lambda_{k}n^{-1/2}+\rem.$$
\end{lemma}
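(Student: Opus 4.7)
The plan is to apply Proposition \ref{prop:main_prop} directly, taking $B = \diag(\bm\Lambda)$ and $P = P_k = e_k e_k^T$. Since $\Phi \sim \mathrm{Haar}(\mathcal{O}(n))$ the transpose $\Phi^T$ is also Haar-distributed, so the first two rows of $\Phi$, viewed as column vectors, have the same joint law as the first two columns of a Haar-random orthogonal matrix. Call these $\bu$ and $\hu$. Using that $P_k$ and $\diag(\bm\Lambda)$ are symmetric, we rewrite
\[
M_{12} \;=\; A_{12}\Phi_{1,k}\Phi_{2,k} \;=\; \langle P_k \bu, \hu \rangle\,\langle B\bu, \hu \rangle,
\]
so that $\E_\Phi[\mathrm{sgn}(M_{12})]$ is precisely the quantity estimated by Proposition \ref{prop:main_prop}.

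The second step is to verify that the hypotheses of the proposition are satisfied in this setting. The normalization $\av(\bm\Lambda) = 0$ and $\std(\bm\Lambda) = 1$ give $\sum_i \lambda_i = 0$ and $\sum_i \lambda_i^2 = n$, so $\Tr(B) = 0 = \tilde O(1)$. The spectral growth assumption gives $\|B\|_{\mathrm{op}} = \|\bm\Lambda\|_\infty = \tilde O(1)$, and since $B^T P_k = \lambda_k e_k e_k^T$, also $\|B^T P_k\|_{\mathrm{op}} = |\lambda_k| = \tilde O(1)$. Finally, $(\mathrm I - P_k)B(\mathrm I - P_k)$ is diagonal with entries $\lambda_i$ for $i\neq k$ and zero at position $k$, so
\[
\|(\mathrm I - P_k)B(\mathrm I - P_k)\|_F^2 \;=\; \sum_{i\neq k}\lambda_i^2 \;=\; n - \lambda_k^2 \;=\; n + \tilde O(1).
\]

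All four hypotheses hold, so Proposition \ref{prop:main_prop} yields
\[
\E_\Phi[\mathrm{sgn}(M_{12})] \;=\; \frac{2^{3/2}}{\pi^{3/2}}\,\Tr(P_k B)\,n^{-1/2} + \tilde O(n^{-3/2}).
\]
Since $\Tr(P_k B) = \Tr(e_k e_k^T \diag(\bm\Lambda)) = \lambda_k$, this is exactly the claimed formula. There is no real obstacle here: the lemma is a clean reduction to the main technical estimate of the previous section, and the work consists entirely of bookkeeping — noting the row/column transposition for the Haar measure, rewriting $M_{12}$ as a product of two quadratic forms in the required format, and checking that the normalization plus spectral growth bound on $\bm\Lambda$ make each of the proposition's scalar hypotheses $\tilde O(1)$.
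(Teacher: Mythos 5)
Your proof is correct and takes essentially the same approach as the paper: a direct application of Proposition \ref{prop:main_prop} with $B = \diag(\bm\Lambda)$ and $P = e_k e_k^T$, followed by $\Tr(PB) = \lambda_k$. The paper states this reduction in two lines and calls the lemma "straightforward"; you have simply spelled out the hypothesis checks (trace, operator norms, Frobenius norm) and the rows-versus-columns point for the Haar measure, all of which are accurate.
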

This lemma is straightforward from Proposition \ref{prop:main_prop}.
\begin{proof}
    The assumption on $\bm\Lambda$ implies that $B=\diag(\bm\Lambda)$ and $P=e_{k}e_{k}^{T}$ satisfy the conditions of Proposition \ref{prop:main_prop}, and $\Tr(BP)=\lambda_{k}$. which tells us that
    $$\E_{\Phi}\left[ \mathrm{sgn}\left(M_{12} \right)\right]=\E_{\Phi}\left[ \mathrm{sgn}(\langle\Phi_{1},P\Phi_2)\rangle\langle\Phi_{1},\diag(\bm\Lambda)\Phi_2)\rangle\right]=\frac{2^{3/2}}{\pi^{3/2}}n^{-1/2}\lambda_k+\rem.$$
\end{proof}
\begin{lemma}[Covariance of Adjacent Edges]\label{lm:sharedvertex}
Let $\bm\Lambda$ be fixed, and assume it satisfies $\av(\bm\Lambda)=0$, $\std(\bm\Lambda)=1,$ and $\|\bm\Lambda\|_{\infty}=\tilde{O}(1)$. Then,  
$$ \Cov_{\Phi}\left( \mathrm{sgn}(M_{13}),\mathrm{sgn}(M_{23}))\right) = \tilde O(n^{-1/2}).$$
\end{lemma}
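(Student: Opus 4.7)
The plan is to use the Gaussian Gram--Schmidt description of the first three rows of $\Phi$ together with Sheppard's formula, much as in Proposition \ref{prop:main_prop}. By the row-exchangeability of the Haar measure on $\mathcal{O}(n)$, $\E[\mathrm{sgn}(M_{13})] = \E[\mathrm{sgn}(M_{23})]$, so
\[\Cov_\Phi\bigl(\mathrm{sgn}(M_{13}),\mathrm{sgn}(M_{23})\bigr) = \E[\mathrm{sgn}(M_{13})\mathrm{sgn}(M_{23})] - \bigl(\E[\mathrm{sgn}(M_{13})]\bigr)^2.\]
Lemma \ref{lm:expectedvalue} already bounds $(\E[\mathrm{sgn}(M_{13})])^2$ by $\tilde O(n^{-1})\subset \tilde O(n^{-1/2})$, so it suffices to prove $|\E[\mathrm{sgn}(M_{13})\mathrm{sgn}(M_{23})]| = \tilde O(n^{-1/2})$.

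I would next represent the three rows via Gram--Schmidt. Let $\bg,\hg,\tg\in\R^n$ be independent standard Gaussians, set $D:=\diag(\bm\Lambda)$ and $P:=e_ke_k^T$, and define
\[\Phi_3 := \bg/\|\bg\|,\quad \Phi_1 := Q_0\hg/\|Q_0\hg\|,\quad \Phi_2 := Q_{01}\tg/\|Q_{01}\tg\|,\]
with $Q_0:=\mathrm I_n-\Phi_3\Phi_3^T$ and $Q_{01}:=Q_0-\Phi_1\Phi_1^T$; these have the same joint law as any three rows of a Haar orthogonal matrix. By the positive-scaling invariance of sign,
\[\mathrm{sgn}(M_{23}) = \mathrm{sgn}\bigl(\langle \tg, Q_{01}P\bg\rangle\,\langle \tg, Q_{01}D\bg\rangle\bigr).\]
Conditional on $(\bg,\hg)$, $\tg$ is still a standard Gaussian, so Sheppard's formula (Lemma \ref{lem: signArcsine}) gives
\[\E\bigl[\mathrm{sgn}(M_{23}) \,\big|\, \bg,\hg\bigr] = \tfrac{2}{\pi}\arcsin(\rho_{23}),\qquad \rho_{23} := \frac{\langle Q_{01}P\bg,\,Q_{01}D\bg\rangle}{\|Q_{01}P\bg\|\,\|Q_{01}D\bg\|}.\]
The tower property gives $\E[\mathrm{sgn}(M_{13})\mathrm{sgn}(M_{23})] = \tfrac{2}{\pi}\E[\mathrm{sgn}(M_{13})\arcsin(\rho_{23})]$, and since $|\arcsin(\rho)|\le\tfrac{\pi}{2}|\rho|$ for $|\rho|\le 1$,
\[\bigl|\E[\mathrm{sgn}(M_{13})\mathrm{sgn}(M_{23})]\bigr| \le \E[|\rho_{23}|],\]
so the problem reduces to bounding $\E[|\rho_{23}|]$.

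I would then perform a concentration analysis for $\rho_{23}$. Using $\Phi_1\in\mathrm{range}(Q_0)$, direct expansion yields
\[\langle Q_{01}P\bg,\,Q_{01}D\bg\rangle = g_k^2\bigl(\lambda_k - \langle \Phi_3, D\Phi_3\rangle\bigr) - g_k\,(\Phi_1)_k\,\langle \Phi_1, D\bg\rangle,\]
$\|Q_{01}P\bg\|^2 = g_k^2(1 - (\Phi_3)_k^2 - (\Phi_1)_k^2)$, and $\|Q_{01}D\bg\|^2 = \|D\bg\|^2 - \langle D\bg,\Phi_3\rangle^2 - \langle D\bg,\Phi_1\rangle^2$. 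On the typical event of probability $1-\tilde O(n^{-\log n})$ furnished by Lemma \ref{lm:typical} and the spectral growth bound,
\[|\lambda_k| = \tilde O(1),\quad |(\Phi_3)_k|,\,|(\Phi_1)_k|,\,|\langle \Phi_3, D\Phi_3\rangle| = \tilde O(n^{-1/2}),\quad |\langle \Phi_1, D\bg\rangle| = \tilde O(1),\]
and $\|D\bg\|^2 = n(1+\tilde O(n^{-1/2}))$. Substituting and simplifying,
\[|\rho_{23}| \le \frac{g_k^2\,\tilde O(1) + |g_k|\,\tilde O(n^{-1/2})}{|g_k|\sqrt n\,(1+\tilde O(n^{-1/2}))} = |g_k|\cdot\tilde O(n^{-1/2}) + \tilde O(n^{-1}).\]
Since $|\rho_{23}|\le 1$ deterministically and $\E[|g_k|] = \sqrt{2/\pi}$, taking expectation gives $\E[|\rho_{23}|] = \tilde O(n^{-1/2})$, which by the preceding paragraph completes the proof.

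The technically delicate step is the final concentration estimate: a priori the denominator $\|Q_{01}P\bg\|$ is only of order $|g_k|$, which can be arbitrarily small, so one must observe that $P\bg = g_k e_k$ also carries a factor of $g_k$, producing a cancellation that keeps $|\rho_{23}|$ uniformly small on the typical event. The lemma asks only for $\tilde O(n^{-1/2})$, but the same analysis actually yields the stronger $\tilde O(n^{-1})$; sharpening further (which matches the numerics underlying Conjecture \ref{conj:CLT}) would require additionally comparing $\rho_{23}$ with its $\hg$-independent counterpart $\rho_{13} := \langle Q_0P\bg,Q_0D\bg\rangle/(\|Q_0P\bg\|\,\|Q_0D\bg\|)$ via a Taylor expansion in the rank-one correction $\Phi_1\Phi_1^T$, and invoking the tower property once more through $\E[\mathrm{sgn}(M_{13})\mid\bg] = \tfrac{2}{\pi}\arcsin(\rho_{13})$.
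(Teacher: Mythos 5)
Your proof is correct, but it conditions on a different pair of rows than the paper does, which changes both the structure of the argument and the cancellation it exploits. The paper fixes the non-shared rows $\Phi_1,\Phi_2$ and integrates over the shared row $\Phi_3$; since $M_{13}M_{23}=(\Phi_1)_k(\Phi_2)_k(\Phi_3)_k^2\langle\Phi_1,D\Phi_3\rangle\langle\Phi_2,D\Phi_3\rangle$ with $(\Phi_3)_k^2\ge 0$, the troublesome projection factor drops out entirely, and Sheppard's formula applies directly to the \emph{product} $\mathrm{sgn}(M_{13}M_{23})$ conditional on $\Phi_1,\Phi_2$, yielding an exact conditional expectation $\frac{2}{\pi}\mathrm{sgn}((\Phi_1)_k(\Phi_2)_k)\arcsin\rho$. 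You instead fix $\Phi_3,\Phi_1$ and integrate over $\Phi_2$ alone, applying Sheppard only to $\mathrm{sgn}(M_{23})$; the cancellation you need is the subtler observation that $P\bg = g_ke_k$ contributes a matching factor of $|g_k|$ to both the numerator and the denominator of $\rho_{23}$, so that the a priori singular $\|Q_{01}P\bg\|\sim|g_k|$ in the denominator is harmless. You then discard the remaining $\mathrm{sgn}(M_{13})$ factor via $|\mathrm{sgn}|\le 1$ and $|\arcsin(\rho)|\le\frac{\pi}{2}|\rho|$, which is cruder than the paper's exact conditional formula but suffices for the stated bound. The paper's symmetric treatment of the two edges is arguably better positioned for refinement (one can expand $\arcsin\rho$ and compute the expectation of $\rho$ over $\Phi_1,\Phi_2$ term by term), whereas your route makes the role of the shared coordinate $g_k$ very explicit. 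One caveat: your closing remark that the same analysis "actually yields the stronger $\tilde O(n^{-1})$" is not supported by what you wrote — you bound $|\E[\mathrm{sgn}(M_{13})\mathrm{sgn}(M_{23})]|\le\E[|\rho_{23}|]$, and $\E[|\rho_{23}|]=\tilde O(n^{-1/2})$, so after subtracting $(\E[\mathrm{sgn}(M_{13})])^2=\tilde O(n^{-1})$ you still only get $\tilde O(n^{-1/2})$. Getting $\tilde O(n^{-1})$ would require exploiting cancellation between $\mathrm{sgn}(M_{13})$ and $\arcsin(\rho_{23})$ rather than bounding them separately, as you acknowledge in the comparison-with-$\rho_{13}$ aside; but that aside is a separate (unexecuted) argument, not a consequence of the analysis already given.
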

\begin{lemma}[Covariance of Non-Adjacent Edges]\label{lm:separate_edges}
Let $\bm\Lambda$ be fixed, and assume it satisfies $\av(\bm\Lambda)=0$, $\std(\bm\Lambda)=1,$ and $\|\bm\Lambda\|_{\infty}=\tilde{O}(1)$. Then,  
$$ \Cov_{\Phi}\left( \mathrm{sgn}(M_{12}),\mathrm{sgn}(M_{34}))\right) = \rem$$
\end{lemma}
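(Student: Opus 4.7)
I would condition on the first two rows $(\Phi_1,\Phi_2)$ and reduce to an application of Proposition \ref{prop:main_prop} in dimension $n-2$. By rotational invariance of Haar measure, conditional on $(\Phi_1,\Phi_2)$ the pair $(\Phi_3,\Phi_4)$ is distributed as the first two rows of a Haar-random orthogonal matrix on $V := \mathrm{span}(\Phi_1,\Phi_2)^{\perp}$. Writing $\Pi := \mathrm{I}-\Phi_1\Phi_1^T-\Phi_2\Phi_2^T$ for the orthogonal projection onto $V$ and using $\Pi \Phi_i = \Phi_i$ for $i=3,4$, we have
\[
M_{34} = \langle \Phi_3, (\Pi P_k \Pi)\Phi_4\rangle\cdot\langle \Phi_3, (\Pi \diag(\bm\Lambda) \Pi)\Phi_4\rangle,
\]
with $\Pi P_k \Pi = (\Pi e_k)(\Pi e_k)^T$ a rank-one projection on $V$ after normalization by $\|\Pi e_k\|^2>0$. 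This matches the setting of Proposition \ref{prop:main_prop} in dimension $n-2$ with $B = \Pi \diag(\bm\Lambda)\Pi$.

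Since $M_{12}$ is deterministic after conditioning and $|\mathrm{sgn}(M_{12})|\le 1$,
\[
\bigl|\Cov(\mathrm{sgn}(M_{12}),\mathrm{sgn}(M_{34}))\bigr| \le \E\bigl|\E[\mathrm{sgn}(M_{34})\mid\Phi_1,\Phi_2] - \E[\mathrm{sgn}(M_{34})]\bigr|,
\]
so it suffices to bound the right-hand side by $\rem$. Applying Proposition \ref{prop:main_prop} in $V$ yields
\[
\E[\mathrm{sgn}(M_{34})\mid \Phi_1,\Phi_2] = \frac{2^{3/2}}{\pi^{3/2}}\,\frac{\langle \Pi e_k,\diag(\bm\Lambda)\Pi e_k\rangle}{\|\Pi e_k\|^2}\,(n-2)^{-1/2} + \rem.
\]
Expanding $\Pi e_k = e_k - \Phi_{1,k}\Phi_1 - \Phi_{2,k}\Phi_2$ and using the Haar-concentration estimate $\Phi_{i,j}=\tilde O(n^{-1/2})$ with probability $1-n^{-\log n}$ together with the spectral growth bound gives $\langle \Pi e_k,\diag(\bm\Lambda)\Pi e_k\rangle = \lambda_k + \tilde O(n^{-1})$ and $\|\Pi e_k\|^{-2} = 1 + \tilde O(n^{-1})$; combined with $(n-2)^{-1/2} = n^{-1/2} + O(n^{-3/2})$ and $\lambda_k = \tilde O(1)$ we obtain
\[
\E[\mathrm{sgn}(M_{34})\mid \Phi_1,\Phi_2] = \tfrac{2^{3/2}}{\pi^{3/2}}\lambda_k n^{-1/2} + \rem,
\]
which matches $\E[\mathrm{sgn}(M_{34})] = \tfrac{2^{3/2}}{\pi^{3/2}}\lambda_k n^{-1/2} + \rem$ from Lemma \ref{lm:expectedvalue}; the exceptional event of probability $n^{-\log n}$ contributes at most $\rem$ since $|\mathrm{sgn}|\le 1$.

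The main technical obstacle is verifying that $B = \Pi \diag(\bm\Lambda)\Pi$ satisfies the four hypotheses of Proposition \ref{prop:main_prop} on a high-probability event. Using $\av(\bm\Lambda)=0$, $\std(\bm\Lambda)=1$, $\|\bm\Lambda\|_\infty = \tilde O(1)$, and $\Phi_{i,j}=\tilde O(n^{-1/2})$, one checks: $\Tr(B) = -\sum_j\lambda_j(\Phi_{1,j}^2+\Phi_{2,j}^2) = \tilde O(1)$ by Haar concentration of this quadratic form around its expectation $\av(\bm\Lambda) = 0$; $\|B\|_{\mathrm{op}}\le\|\bm\Lambda\|_\infty = \tilde O(1)$; $\|B(\Pi e_k)(\Pi e_k)^T\|_{\mathrm{op}}/\|\Pi e_k\|^2 = \tilde O(1)$ since $B(\Pi e_k) = \lambda_k e_k + \tilde O(n^{-1/2})$ in norm; and $\|(\mathrm{I}_V-P)B(\mathrm{I}_V-P)\|_F^2 = \|B\|_F^2 + \tilde O(1) = \sum_j\lambda_j^2 + \tilde O(1) = (n-2)+\tilde O(1)$ after absorbing the rank-two correction from $\Pi$ and the rank-one correction from $P$. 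Each verification is a routine application of the concentration tools used to prove Proposition \ref{prop:main_prop} itself, but clean bookkeeping under the rank-two random correction $\Pi$ is where the main care is required.
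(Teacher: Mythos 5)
Your proposal is correct and follows essentially the same strategy as the paper's proof: condition on the first two rows, observe that $(\Phi_3,\Phi_4)$ is then Haar-distributed on the orthogonal complement, and apply Proposition~\ref{prop:main_prop} in dimension $n-2$ with $B$ the compression of $\diag(\bm\Lambda)$ and $P$ the normalized rank-one projection onto $\Pi e_k$, then verify the hypotheses and that $\Tr(PB)=\lambda_k+\tilde O(n^{-1})$ on the high-probability event where $\Phi_{1,k},\Phi_{2,k}=\tilde O(n^{-1/2})$. The only cosmetic differences are that you work with the projection $\Pi$ in $\R^n$ rather than the isometry $U:\R^{n-2}\to V$ used in the paper (these are conjugate and interchangeable), and that a couple of your hypothesis checks (e.g.\ $\Tr(B)=\tilde O(1)$ and $\|B^TP\|=\tilde O(1)$) invoke Haar concentration where a deterministic bound via $\|\bm\Lambda\|_\infty=\tilde O(1)$ and $\|\bu\|=\|\hu\|=1$ already suffices, as the paper notes.
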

The proofs of Lemma \ref{lm:separate_edges} and Lemma \ref{lm:sharedvertex} are in Subsection \ref{sec:variance}. We now use these lemmas to prove Theorem \ref{thm:main}.

\begin{proof}[Proof of Theorem \ref{thm:main} based on Lemmas $\ref{lm:expectedvalue},\ref{lm:sharedvertex},\ref{lm:separate_edges}$. ]
    Let $\mathcal{E}_{T}$ be the event of $\|\bm\Lambda\|_{\infty}=\tilde{O}(1)$ and let $\on{T}$ be the indicator of $\mathcal{E}_{T}$. By assumption, $\E(1-\on{T})=\rem$, so for any $x,y$ random variables
    \[\E[\mathrm{sign}(x)]= \E[\mathrm{sign}(x)\on{T}+\mathrm{sign}(x)(1-\on{T})]=\E[\mathrm{sign}(x)\on{T}]+\rem,\]
    and similarly, 
    \[\Cov\left(\mathrm{sign}(x),\mathrm{sign}(y)\right)=\Cov\left(\mathrm{sign}(x)\on{T},\mathrm{sign}(y)\right)+\rem.\]
    By Lemma \ref{lm:expectedvalue}, we conclude the needed expectation result
$$      \E[\phi(A,k)]=\frac{1}{2}{n \choose 2}\left(1+\E[\mathrm{sign}(M_{12}]\on{T})+\rem\right)={n \choose 2}\left(\frac{1}{2}+\frac{\sqrt{2}}{\pi^{3/2}}n^{-1/2}\E_{\bm\Lambda}[\lambda_{k}]+\rem\right).  
 $$ 
Using that $\Var(\mathrm{sign}(M_{12}))\le1$ by definition, together with Lemma \ref{lm:separate_edges} and Lemma \ref{lm:sharedvertex}, we get the needed variance bound
\begin{align*}
    \Var(\phi(A,k))\ \le &\  \frac{1}{4} {n \choose 2} + \frac{1}{2} (n-2) {n \choose 2}\left( \Cov\left( \mathrm{sgn}(M_{13})\on{T},\mathrm{sgn}(M_{23}))\right)+\rem\right)\\
    &\qquad+2 {n \choose 2} {n-2 \choose 2}\left( \Cov\left( \mathrm{sgn}(M_{12})\on{T},\mathrm{sgn}(M_{34}))\right)+\rem\right)\\
    = &\ \tilde{O}(n^{5/2}).
\end{align*}
\end{proof}

\subsection{Estimating $\Cov(\mathrm{sign}(M_{ij}),\mathrm{sign}(M_{rs}))$ - Proofs of Lemma \ref{lm:sharedvertex} and Lemma \ref{lm:separate_edges}}\label{sec:variance}

For convenience denote $(\bu,\hu,\bv,\hv)=(\Phi_{1},\Phi_2,\Phi_3,\Phi_4)$, the four columns of a Haar random orthogonal matrix $\Phi$.\footnote{Equivalently, $(\bu,\hu,\bv,\hv)$ is a random element of the Stiefel manifold $V_{4} (\R^n)$ with the the uniform probability measure.} 
We assume $\bm\Lambda=(\lambda_1,\ldots,\lambda_n)$ is fixed with $\av(\bm\Lambda)=0,\ \std(\bm\Lambda)=1$, and $\|\bm\Lambda\|_{\infty}=\tilde{O}(1)$. Define the quartic polynomial $M:\R^n\times \R^n\to \R$ by
\begin{equation} \label{eq:Mdef}
    M(x,y) := x_ky_k \sum\limits_{j=1}^n\lambda_jx_jy_j=\langle e_{k}e_{k}^Tx,y\rangle\langle \diag(\bm\Lambda)x,y\rangle,
\end{equation}
so that
\begin{equation} \label{eq:identities}
    M_{12} = M(\bu,\hu), \quad M_{13} = M(\bu,\bv),\quad \quad M_{23} = M(\hu,\bv),\quad M_{34} = M(\hu,\hv).
\end{equation}

To help with the conditioning, we shall reduce some of the variables to Gaussian space. For that, we also introduce the (random) isometry $U:\R^{n-2} \to \mathrm{span}(\bu,\hu)^\perp \subset \R^{n}.$ In other words, $U$ is the unique, up to a change of basis, matrix such that $U^TU = \mathrm{I}_{n-2}$ and $UU^T = \mathrm{I}_n - \bu\bu^T - \hu\hu^T.$




\begin{proof}[Proof of Lemma \ref{lm:sharedvertex}]
Let $\bg\sim N(0,\mathrm{I}_{n-2})$ be a standard Gaussian vector independent of $\bu,\hu$. By the definition of the matrix $U$, we have 
\begin{align*}
    \bv\ \mid\ (\bu,\hu) \stackrel{\mathrm{law}}{=}&\ \frac{U\bg}{\|U\bg\|}\\
    \mathrm{sgn}(M_{13}M_{23})\ \mid\ (\bu,\hu)
  \stackrel{\mathrm{law}}{=}&\ \mathrm{sgn}(M(\bu,U\bg)M(\hu,U\bg))\\
 =&\ \mathrm{sgn}\big(\langle e_{k}e_{k}^{T}\bu,U\bg\rangle\langle\diag(\bm\Lambda)\bu,U\bg\rangle\,\langle e_{k}e_{k}^{T}\hu,U\bg\rangle\langle\diag(\bm\Lambda)\hu,U\bg\rangle\big)
 \\
 =&\ \mathrm{sgn}(u_k\,\hat{u}_k)\mathrm{sgn}\big(\langle U^{T}e_{k},\bg\rangle\langle U^T\diag(\bm\Lambda)\bu,\bg\rangle\,\langle U^Te_{k},\bg\rangle\langle U^T\diag(\bm\Lambda)\hu,\bg\rangle\big)
 \\
 =&\ \mathrm{sgn}(u_k\,\hat{u}_k)\mathrm{sgn}\big(\langle U^T\diag(\bm\Lambda)\bu,\bg\rangle\,\langle U^T\diag(\bm\Lambda)\hu,\bg\rangle\big)
\end{align*}
Denote the following $(\bu,\hu)$-dependent parameters $$\alpha_{\bu}:=\|\diag(\bm\Lambda)\bu\|,\ \alpha_{\hu}:=\|\diag(\bm\Lambda)\hu\|,\ \beta_{\bu}=\langle\bu,\diag(\bm\Lambda)\bu\rangle,\ \beta_{\hu}=\langle\hu,\diag(\bm\Lambda)\hu\rangle,\ \gamma:=\langle\bu,\diag(\bm\Lambda)\hu\rangle,$$ and define \begin{align*}
\rho:= &\ \frac{\langle U^T\diag(\bm\Lambda)\bu,U^T\diag(\bm\Lambda)\hu\rangle}{\|U^T\diag(\bm\Lambda)\bu\|\|U^T\diag(\bm\Lambda)\hu\|}\\= &\ \frac{\langle \diag(\bm\Lambda)UU^T\diag(\bm\Lambda)\bu,\hu\rangle}{\|UU^T\diag(\bm\Lambda)\bu\|\|UU^T\diag(\bm\Lambda)\hu\|}\\= &\ \frac{\langle \diag(\bm\Lambda)^2\bu,\hu\rangle-\beta_{\bu}\gamma-\beta_{\hu}\gamma}{\sqrt{(\alpha_{\bu}^2-\beta_{\bu}^2-\gamma^2)(\alpha_{\hu}^2-\beta_{\hu}^2-\gamma^2)})},
\end{align*} so that Lemma \ref{lem: signArcsine} gives
\[\E\left[\mathrm{sgn}(M_{13}M_{23})\ \mid\ (\bu,\hu)\right]=\frac{2}{\pi}\mathrm{sgn}(u_k\,\hat{u}_k)\arcsin{\rho}=\frac{2}{\pi}\mathrm{sgn}(u_k\,\hat{u}_k)\rho+O(\rho^3).\]
Lemma \ref{lm:typical} allows to estimate the above parameters. First consider $\xi,\hat{\xi}$ independent standard Gaussian vectors of size $n$, and let $Q=\mathrm{I}_{n}-\frac{\xi\,\xi^{T}}{\|\xi\|^2}$, so that 
\[(\bu,\hu) \stackrel{\mathrm{law}}{=}\left(\frac{\xi}{\|\xi\|},\frac{Q\hat{\xi}}{\|Q\hat{\xi}\|}\right).\]
Denote the following $(\xi,\hat{\xi})$-dependent small parameters $$\eps_{1}:=\frac{\|\xi\|^2-n}{n},\ \eps_{2}:=\frac{\|\hat{\xi}\|^2-n}{n},\ \eps_{3}:=\frac{\|\diag(\bm\Lambda)\xi\|^2-n}{n},\ \eps_{4}:=\frac{\|\diag(\bm\Lambda)\hat{\xi}\|^2-n}{n},$$
$$\eps_{5}=\frac{\langle\xi,\diag(\bm\Lambda)\xi\rangle}{n},\ \eps_{6}=\frac{\langle\hat{\xi},\diag(\bm\Lambda)\hat{\xi}\rangle}{n},\ \eps_{7}:=\frac{\langle\xi,\diag(\bm\Lambda)\hat{\xi}\rangle}{n},\ \eps_{8}:=\frac{\langle\xi,\diag(\bm\Lambda)^2\hat{\xi}\rangle}{n}.$$
Notice that $\E[\eps_{j}]=0$ for all $j$ because of the $\xi,\hat{\xi}$ independence and the normalization $\av(\bm\Lambda)=0,\ \std(\bm\Lambda)=1$. Moreover, each $\eps_j$ is a quadratic form in the variables $\xi,\hat{\xi}$ and so, together with the bound on $\|\bm\Lambda\|_{\infty}$, the Gaussian concentration from Lemma \ref{lm:typical} applies, as in \eqref{eq:specconc}. This means that the typical event $\mathcal{E}_{T}$ for which $\eps_{j}=\tilde{O}(n^{-1/2})$ for all $j$ uniformly has $\P(\mathcal{E}_{T})\ge 1-n^{n\log n}$ by Lemma \ref{lm:typical}. We can write
$$\alpha_{\bu}^2=\frac{1+\eps_3}{1+\eps_1},\  \alpha_{\hu}^2=\frac{1+\eps_4}{1+\eps_2},\ \beta_{\bu}=\frac{\eps_{5}}{(1+\eps_1)},\ \beta_{\hu}=\frac{\eps_{6}}{(1+\eps_2)},\ \gamma:=\frac{\eps_{7}}{\sqrt{(1+\eps_1)(1+\eps_2)}},$$
which means that under the typical event 
$$\rho\, \on{T}=\frac{\frac{\eps_8}{\sqrt{(1+\eps_1)((1+\eps_2))}}-\frac{\eps_5\,\eps_7}{(1+\eps_1)\sqrt{(1+\eps_1)((1+\eps_2))}}-\frac{\eps_6\,\eps_7}{(1+\eps_2)\sqrt{(1+\eps_1)((1+\eps_2))}}}{\sqrt{\frac{1+\eps_3}{1+\eps_1}-\frac{\eps_5^2}{(1+\eps_1)^2}-\frac{\eps_7^2}{(1+\eps_1)(1+\eps_2)}}\sqrt{\frac{1+\eps_4}{1+\eps_2}-\frac{\eps_6^2}{(1+\eps_2)^2}-\frac{\eps_7^2}{(1+\eps_1)(1+\eps_2)}}}\, \on{T}=\tilde{O}(n^{-1/2}).$$
Therefore
\[\E\left[\mathrm{sgn}(M_{13}M_{23})\right]=\E_{\bu,\hu}\left[\frac{2}{\pi}\mathrm{sgn}(u_k\,\hat{u}_k)\arcsin{\rho}\on{T}\right]+\tilde{O}(n^{-\log n})=\tilde{O}(n^{-1/2}).\]
Since $\E\left[\mathrm{sgn}(M_{13})\right]\E\left[\mathrm{sgn}(M_{23})\right]=\E\left[\mathrm{sgn}(M_{12})\right]^2=\tilde{O}(n^{-1})$ by Lemma \ref{lm:expectedvalue}, we conclude that $$\Cov(\mathrm{sgn}(M_{13}),\mathrm{sgn}(M_{23}))=\tilde{O}(n^{-1/2}).$$
\end{proof}
\begin{proof}[Proof of Lemma \ref{lm:separate_edges}]
Our desired result is $I=\tilde{O}(n^{-3/2})$, where
\[I:=\E[\mathrm{sgn}\left(M(\bu,\hu)M(\bv,\hv)\right)]-\E[\mathrm{sgn}(M(\bu,\hu))]\E[\mathrm{sgn}\left(M(\bu,\hu)\right)].\]
Notice that, by \eqref{eqn:expectation}, 
\begin{align*}
I&=\E_{\bu,\hu}\left[\mathrm{sgn}\left(M(\bu,\hu)\right)\E_{\bv,\hv}\left[\mathrm{sgn}\left(M(\bv,\hv)\right)-\mathrm{sgn}\left(M(\bu,\hu)\right)\right]\right]\\
&=\E_{\bu,\hu}\left[\mathrm{sgn}\left(M(\bu,\hu)\right)\E_{\bv,\hv}\left[\mathrm{sgn}\left(M(\bv,\hv)\right)-\frac{2^{3/2}}{\pi^{3/2}}n^{-1/2}\lambda_{k}\right]\right]+\rem.
\end{align*}
For a fixed choice of $\bu,\hu$, assuming only that $(\bu)_k,(\hu)_k$ are $\tilde{O}(n^{-1/2})$, we recall the isometry $U:\R^{n-2}\to\mathrm{span}(\bu,\hu)^{\perp}$ as above.
Let $\xi,\hat{\xi}$ be the first two columns of a Haar-random orthogonal $(n-2)\times (n-2)$ matrix. Define $P=\frac{1}{\|U^{T}e_k\|^2}U^Te_{k}e_{k}^{T}U$ and $B=U^T\diag(\bm\Lambda)U$. Then, 
\begin{align*}
    (\bv,\hv)\ \mid\ (\bu,\hu)& \stackrel{\mathrm{law}}{=}(U\xi,U\hat{\xi})\\
    \mathrm{sgn}(M(\bv,\hv))\ \mid\ (\bu,\hu)
 & \stackrel{\mathrm{law}}{=}\mathrm{sgn}(\langle e_{k}e_{k}^{T}U\xi,U\hat{\xi}\rangle\langle\diag(\bm\Lambda)U\xi,U\hat{\xi}\rangle)\\
 & = \mathrm{sgn}(\langle P\xi,\hat{\xi}\rangle\langle B\xi,\hat{\xi}\rangle)
\end{align*}
If we can show that $P$ and $B$ satisfy the assumptions of Proposition \ref{prop:main_prop}, and that $\Tr(PB)=\lambda_k+\tilde{O}(1/n)$, then we get that with high probability in $\bu,\hu$,
\[\E_{\bv,\hv}\left[\mathrm{sgn}\left(M(\bv,\hv)\right)\right]-\frac{2^{3/2}}{\pi^{3/2}}n^{-1/2}\lambda_{k}=\rem,\]
and therefore,
\[I=\E_{\bu,\hu}\left[\mathrm{sgn}\left(M(\bu,\hu)\right)\left[\rem\right]\right]+\rem=\rem.\]
We are left with showing that $P$ and $B$ satisfy the assumptions of Proposition \ref{prop:main_prop}, and that $\Tr(PB)=\lambda_k+\tilde{O}(1/n)$. Notice that the operator norm of $B$ is bounded by $\|B\|\le \|\bm\Lambda\|_{\infty}=\tilde{O}(1)$ by assumption. Similarly, the assumptions $\|\bm\Lambda\|_{\infty}=\tilde{O}(1)$ and $\av(\bm\Lambda)=0$ gives $\Tr(B)=\Tr(\diag(\bm\Lambda))-\langle\bu,\diag(\bm\Lambda)\bu\rangle-\langle\hu,\diag(\bm\Lambda)\hu\rangle=\tilde{O}(1)$. The assumption that $ \std(\bm\Lambda)=1$ allows to bound the Frobenius norm: 
$\left|\|B\|_{F}^2-\|\bm\Lambda\|_{2}^2\right|\le 2\|\bm\Lambda\|_{\infty}^2$ means $\|B\|_{F}^2=n+\tilde{O}(1)$. Similarly, $\left|\|(\mathrm{I}_{n-2}-P)B(\mathrm{I}_{n-2}-P)\|_{F}^2-\|B\|_{F}^2\right|\le 2\|\bm\Lambda\|_{\infty}^2$ so $\|(\mathrm{I}_{n-2}-P)B(\mathrm{I}_{n-2}-P)\|_{F}^2=n+\tilde{O}(1)$. Finally, $\|B^{T}P\|\le\|B\|=\tilde{O}(1)$. We conclude that $P$ and $B$ satisfy the assumptions of Proposition \ref{prop:main_prop}. We now calculate 
\begin{align*}
 \Tr(PB)= & \frac{1}{\|U^{T}e_k\|^2}\Tr(U^Te_{k}e_{k}^{T}UU^T\diag(\bm\Lambda)U)\\
    = & \frac{1}{1-(\bu)_k^2-(\hu)_k^2}\Tr(e_{k}e_{k}^{T}(\mathrm{I}_n-\bu\bu^T-\hu\hu^{T})\diag(\bm\Lambda)(\mathrm{I}_n-\bu\bu^T-\hu\hu^{T}))\\
    = & \frac{
\lambda_{k}-2\lambda_k((\bu)_k^2+(\hu)_k^2)+\langle\diag(\bm\Lambda)(\bu\bu^T+\hu\hu^{T})e_{k},(\bu\bu^T+\hu\hu^{T})e_{k}\rangle
    }{1-(\bu)_k^2-(\hu)_k^2}
\end{align*}
so
\begin{align*}
    \left|\Tr(PB)-\lambda_k\right|= & \left|\frac{
-\lambda_k((\bu)_k^2+(\hu)_k^2)+\langle\diag(\bm\Lambda)(\bu\bu^T+\hu\hu^{T})e_{k},(\bu\bu^T+\hu\hu^{T})e_{k}\rangle
    }{1-(\bu)_k^2-(\hu)_k^2}\right|\\   
     \le& 2\|\bm\Lambda\|_{\infty}\left|\frac{
(\bu)_k^2+(\hu)_k^2}{1-(\bu)_k^2-(\hu)_k^2}\right|=\tilde{O}(1/n).
\end{align*}
\end{proof}
\section*{Acknowledgements}
The authors thank the Department of Mathematics at the Massachusetts Institute of Technology, where this project was initiated.
L.A.\ was supported by the Department of Mathematics at MIT and by the Simons Foundation Grant No.~601948 (D.J.).
D.M.\ was partially supported by the Brian and Tiffinie Pang Faculty Fellowship.
This material is based upon work supported by the National Science Foundation under Grant No.~DMS-2513687.
J.U.\ thanks Mehtaab Sawhney for many stimulating conversations on the subject.
The authors are also grateful to Louisa Thomas for helpful suggestions that improved the presentation.

{ \small 
	\bibliographystyle{plain}
	\bibliography{main.bib} }

\clearpage
\phantomsection

\end{document}